\newcommand{\circled}[1]{\raisebox{.5pt}{\textcircled{\raisebox{-.9pt} {#1}}}}
\newcommand{\sqsub}{\,\raisebox{-.5ex}
  {$\stackrel{\textstyle\sqsubset}{\scriptstyle{\sim}}$}\,}
  \newcommand{\aasg}{\,\raisebox{0.065ex}{:}{=}\,}
\newcommand{\sbr}[1]{\lbrack \! \lbrack #1 \rbrack \! \rbrack}
\newcommand{\sq}{\textit{q}}
\newcommand{\sr}{\textit{run}}
\newcommand{\sok}{\textit{ok}}
\newcommand{\srd}{\textit{read}}
\newcommand{\sd}{\textit{done}}
\newcommand{\sw}{\textit{write}}
\newtheorem{proposition}{Proposition}
\newtheorem{theorem}{Theorem}
\newtheorem{definition}{Definition}
 \theoremstyle{definition}
   \newtheorem{example}{Example}
\title{Slot Games for Detecting Timing Leaks of Programs}
\author{Aleksandar S. Dimovski
\institute{Faculty of Information-Communication Tech., FON University, Skopje, 1000, MKD }
\email{aleksandar.dimovski@fon.edu.mk}
}
\begin{document}
\maketitle

\begin{abstract}
In this paper we describe a method for verifying secure information flow
of programs, where apart from direct and indirect flows
a secret information can be leaked through covert timing channels.
That is, no two computations of a program that differ only on
high-security inputs can be distinguished by low-security outputs and
timing differences. We attack this problem by using slot-game
semantics for a quantitative analysis of programs. We show how
slot-games model can be used for performing a precise security
analysis of programs, that takes into account both extensional and
intensional properties of programs. The practicality of this
approach for automated verification is also shown.
\end{abstract}

\section{Introduction}

Secure information flow analysis is a technique which performs a static analysis
of a program with the goal of proving that it will not leak any sensitive (secret)
information improperly. If the program passes the test, then we say that it is
secure and can be run safely. There are several ways in which secret information
can be leaked to an external observer. The most common are direct and indirect
leakages, which are described by the so-called non-interference property \cite{Gog,SS}.
We say that a program satisfies the non-interference property if its
high-security (secret) inputs do not affect its low-security (public) outputs,
which can be seen by external observers.

However, a program can also leak information through its timing behaviour,
where an external observer can measure its total running time.
Such timing leaks are difficult to detect and prevent, because they can exploit low-level
implementation details. To detect timing 
leaks, we need to ensure that 
the total running time of a program
do not depend on its high-security inputs.

In this paper we describe a game semantics based approach for performing
a precise security analysis.
We have already shown in \cite{D13} how game semantics can be applied for verifying the
non-interference property.
Now we use slot-game semantics to check for timing leaks of closed
and open programs. We focus here only on detecting covert timing channels, since
the non-interference property can be verified similarly as in \cite{D13}.
Slot-game semantics was developed in \cite{Gh05} for a quantitative analysis of
Algol-like programs.
It is suitable for verifying the above security properties, since it takes into
account both extensional (\emph{what} the program computes) and intensional
(\emph{how} the program computes) properties of programs.
It represents a kind of denotational semantics induced by the theory
of operational improvement of Sands \cite{Sands}. Improvement is a refinement
of the standard theory of operational approximation, where we say that one
program is an improvement of another if its execution is more efficient
in any program context.
We will measure efficiency of a program as the sum of costs associated with basic operations
it can perform.
It has been shown that slot-game semantics
is fully abstract (sound and complete) with respect to operational improvement,
so we can use it as a denotational theory of improvement to analyse
programming languages.

The advantages of game semantics (denotational) based approach for verifying security are several.
We can reason about open programs, i.e.\ programs with non-locally defined identifiers.
Moreover, game semantics is  compositional, which enables analysis about program
fragments to be combined into an analysis of a larger program.
Also the model hides the details of local-state manipulation of a program,
which results in small models with maximum level of abstraction where are
represented only visible input-output behaviours enriched with costs that
measure their efficiency.
All other behaviour is abstracted away, which makes this model very suitable
for security analysis.
Finally, the game model for some language fragments admits finitary representation
by using regular languages or CSP processes \cite{GM,DL}, and has already been
applied to automatic program verification. Here we present another application of
algorithmic game semantics
for automatically verifying security properties of programs.

\paragraph{Related work.}
The most common approach to ensure security properties of programs is by using
security-type systems \cite{Slam}.
Here for every program component are defined security types,
which contain information about their types and security levels.
Programs that are well-typed under these type systems satisfy certain security properties.
Type systems for enforcing non-interference of programs have been proposed by
Volpano and Smith in \cite{Vol}, and subsequently they have been extended to
detect also covert timing channels in \cite{Vol97,Agat}.
A drawback of this approach is its imprecision, since many secure programs are not typable
and so are rejected. A more precise analysis of programs can
be achieved by using semantics-based approaches \cite{Joshi}.

\section{Syntax and Operational Semantics} \label{lang}

We will define a secure information flow analysis for
Idealized Algol (IA), a small Algol-like language introduced
by Reynolds \cite{Rey1} which has been used as a metalanguage in
the denotational semantics community. It is a  call-by-name $\lambda$-calculus
extended with imperative features and locally-scoped variables.
In order to be able to perform an automata-theoretic analysis of the language,
we consider here its second-order recursion-free fragment (IA$_2$ for short).
It contains finitary data types $D$: $\mathsf{int}_n=\{ 0, \ldots, n-1 \}$ and $\mathsf{bool}=\{ tt, ff \}$,
and first-order function types: $T ::= B \mid B \rightarrow T$, where $B$
ranges over base types: expressions ($\mathsf{exp}D$), commands ($\mathsf{com}$),
and variables ($\mathsf{var}D$).

Syntax of the language is given by the following grammar:
\[
\begin{array}{l}
M ::= \!\! x \! \mid \! v \! \mid \! \mathsf{skip} \mid \! \mathsf{diverge} \mid M\, \mathsf{op} \, M \mid M ; \!\! M \mid
 \mathsf{if}\,M\,\mathsf{then}\,M \,\mathsf{else}\, M \! \mid \! \mathsf{while}\,M\,\mathsf{do}\,M   \\
 \ \qquad \mid  M := M \mid !M \mid \mathsf{new}_D \: x\!:=\!v \: \mathsf{in} \: M
  \mid \mathsf{mkvar}_D MM \mid \! \lambda x . M \mid MM
\end{array}
\]
where $v$ ranges over constants of type $D$.

Typing judgements are of the form $\Gamma
\vdash M : T$, where $\Gamma$ is a type \emph{context} consisting of
a finite number of typed free identifiers.
Typing rules of the language are standard \cite{AM2}, but the general
application rule is broken up into the linear application and the contraction rule
\footnote{ $M[N/x]$ denotes the capture-free substitution of $N$ for $x$ in $M$.}.
\begin{center}
\AxiomC{$\Gamma \vdash M : B \to T $} \AxiomC{$\Delta \vdash N : B $}
 \BinaryInfC{$\Gamma, \Delta \vdash MN : T$}
\DisplayProof \qquad
\AxiomC{$\Gamma, x_1:T, x_2:T \vdash M : T'$}
 \UnaryInfC{$\Gamma, x:T \vdash M[x/x_1, x/x_2]:T' $}
 \DisplayProof
\end{center}
We use these two rules to have control over multiple occurrences of
free identifiers in terms during typing.

Any input/output operation in a term is done through global variables, i.e.\ free
identifiers of type $\mathsf{var}D$.
So an input is read by de-referencing a global variable, while an output is written
by an assignment to a global variable.

The operational semantics is defined in terms of a small-step evaluation
relation using a notion of an evaluation context \cite{Fel}.
A small-step evaluation (reduction) relation is of the form:
\[
\Gamma \vdash M,\mathrm{s} \longrightarrow M',\mathrm{s}'
\]
where $\Gamma$ is a so-called $\mathsf{var}$-context which contains
only identifiers of type $\mathsf{var}D$; $\mathrm{s}$, $\mathrm{s}'$
are $\Gamma$-states which assign data values to the variables in $\Gamma$;
and $M$, $M'$ are terms.
The set of all $\Gamma$-states will be denoted by $St(\Gamma)$.

Evaluation contexts are contexts
\footnote{A context $C[-]$ is a term with (several occurrences of) a hole in it,
such that if $\Gamma\vdash M:T$ is a term of the same type as the hole
then $C[M]$ is a well-typed closed term of type
$\mathsf{com}$, i.e.\ $\vdash C[M]:\mathsf{com}$.}
containing a single hole which is
used to identify the next sub-term to be evaluated (reduced).
They are defined inductively by the following grammar:
\[
\begin{array}{l}
E ::=  [-]  \mid  E M  \mid E ; \! M  \mid \mathsf{skip} ; \! E  \mid  E \, \mathsf{op} \, M
\mid v \, \mathsf{op} \, E \mid  \mathsf{if}\, E \,\mathsf{then}\,M \,\mathsf{else}\, M  \mid  M := E  \mid    E := v  \mid !E
\end{array}
\]

The operational semantics is defined in two stages. First,
a set of basic reduction rules are defined in Table~\ref{red_rul}.
We assign different (non-negative) costs to each reduction rule,
in order to denote how much computational time is needed for
a reduction to complete.
They are only descriptions of time and we can give them different
interpretations describing how much real time they denote.
Such an interpretation can be arbitrarily complex. So the semantics
is parameterized on the interpretation of costs.
Notice that we write
$\mathrm{s} \otimes (x \mapsto v)$ to denote a $\{\Gamma,x\}$-state
which properly extends $\mathrm{s}$ by mapping $x$ to the value $v$.

\begin{table}
\fbox{
\begin{minipage}{87ex}
$ \begin{array}{@{}l}
 \Gamma \vdash n_1 \, \mathsf{op} \, n_2, \mathrm{s} \longrightarrow^{k_{op}} n, \mathrm{s}, \ \textrm{where} \, n=n_1 \mathrm{op} n_2 \\
 \Gamma \vdash \mathsf{skip} ; \mathsf{skip}, \mathrm{s} \longrightarrow^{k_{seq}} \mathsf{skip}, \mathrm{s}' \\
 \Gamma \vdash \mathsf{if} \, tt \, \mathsf{then} \, M_1 \, \mathsf{else} M_2, \mathrm{s} \longrightarrow^{k_{if}} M_1, \mathrm{s} \\
 \Gamma \vdash \mathsf{if} \, ff \, \mathsf{then} \, M_1 \, \mathsf{else} M_2, \mathrm{s} \longrightarrow^{k_{if}} M_2, \mathrm{s} \\
 \Gamma \vdash x \aasg v', \mathrm{s} \otimes (x \mapsto v) \longrightarrow^{k_{asg}} \mathsf{skip}, \mathrm{s} \otimes (x \mapsto v') \\
 \Gamma \vdash !x, \mathrm{s} \otimes (x \mapsto v) \longrightarrow^{k_{der}} v, \mathrm{s} \otimes (x \mapsto v) \\
 \Gamma \vdash (\lambda x.M)M', \mathrm{s} \longrightarrow^{k_{app}} M[M'/x], \mathrm{s} \\
  \Gamma \vdash \mathsf{new}_D \, x \aasg v \, \mathsf{in} \, \mathsf{skip}, \mathrm{s} \longrightarrow^{k_{new}} \mathsf{skip}, \mathrm{s}
 \end{array} $
\end{minipage}
} \caption{Basic Reduction Rules } \label{red_rul}
\end{table}

We also have reduction rules for iteration, local variables, and $\mathsf{mkvar}_D$ construct, which do not incur additional costs.
\begin{center}
$\Gamma \vdash \mathsf{while} \, b \, \mathsf{do} \, M, \mathrm{s} \longrightarrow \mathsf{if} \, b \, \mathsf{then} \, (M ; \mathsf{while} \, b \,
\mathsf{do} \, M) \, \mathsf{else \, skip}, \mathrm{s}$ \\
\AxiomC{$\Gamma, y \vdash M [y/x], \mathrm{s} \otimes (y \mapsto v) \longrightarrow M', \mathrm{s'} \otimes (y \mapsto v')$}
 \UnaryInfC{$\Gamma \vdash \mathsf{new}_D \, x \aasg v \, \mathsf{in} \, M, \mathrm{s} \longrightarrow
 \mathsf{new}_D \, x \aasg v' \, \mathsf{in} \, M'[x/y], \mathrm{s'}$}
 \DisplayProof \\
 $\Gamma \vdash (\mathsf{mkvar}_D \, M_1 M_2) \aasg v, \mathrm{s} \longrightarrow M_1 v, \mathrm{s} \qquad
 \Gamma \vdash !(\mathsf{mkvar}_D \, M_1 M_2), \mathrm{s} \longrightarrow M_2, \mathrm{s}$
\end{center}

Next, the in-context reduction rules for arbitrary terms are defined as:
\begin{center}
\AxiomC{$\Gamma \vdash M, \mathrm{s} \longrightarrow^{n} M', \mathrm{s'} $}
 \UnaryInfC{$\Gamma \vdash E[M], \mathrm{s} \longrightarrow^{n} E[M'], \mathrm{s'} $}
 \DisplayProof
\end{center}
The small-step evaluation relation is deterministic, since arbitrary term
can be uniquely partitioned into an evaluation context and a sub-term,
which is next to be reduced.

We define the reflexive and transitive closure of the small-step reduction
relation as follows:
\begin{center}
\AxiomC{$\Gamma \vdash M, \mathrm{s} \longrightarrow^{n} M', \mathrm{s'} $}
 \UnaryInfC{$\Gamma \vdash M, \mathrm{s} \rightsquigarrow^{n} M', \mathrm{s'}$}
\DisplayProof \qquad
\AxiomC{$\Gamma \vdash M, \mathrm{s} \rightsquigarrow^{n} M', \mathrm{s'} $}
\AxiomC{$\Gamma \vdash M', \mathrm{s'} \rightsquigarrow^{n'} M'', \mathrm{s''} $}
 \BinaryInfC{$\Gamma \vdash M, \mathrm{s} \rightsquigarrow^{n+n'} M'', \mathrm{s''}$}
 \DisplayProof
\end{center}

Now a theory of operational improvement is defined \cite{Sands}.
Let $\Gamma \vdash M : \mathsf{com}$ be a term, where $\Gamma$ is a $\mathsf{var}$-context.
We say that $M$ \emph{terminates in $n$ steps} at
state $\mathrm{s}$, written $M,\mathrm{s} \Downarrow^n$,
if $\Gamma \vdash M,\mathrm{s} \rightsquigarrow^{n}
\mathsf{skip},\mathrm{s}'$ for some state $\mathrm{s}'$.
If $M$ is a closed term and $M, \emptyset \Downarrow^n$, then
 we write $M \Downarrow^n$.
If $M \Downarrow^n$ and $n \leq n'$, we write $M \Downarrow^{\leq n'}$.
We say that a term $\Gamma\vdash M:T$ may be \emph{improved}
by $\Gamma\vdash N:T$, denoted by $\Gamma\vdash M \gtrsim N$, if and
only if for all contexts $C[-]$, if $C[M]\Downarrow^n$  then $C[N]\Downarrow^{\leq n}$.
If two terms improve each other they are considered
\emph{improvment-equivalent}, denoted by
$\Gamma\vdash M \thickapprox N$.


Let $\Gamma, \Delta \vdash M:T$ be a term where $\Gamma$
is a $var$-context and $\Delta$ is an arbitrary context.
Such terms are called \emph{split terms}, and we denote them as $\Gamma | \Delta \vdash M:T$.
If $\Delta$ is empty, then these terms are called \emph{semi-closed}.
The semi-closed terms have only some global variables, and
the operational semantics is defined only for them.
We say that  a semi-closed
term $h:\mathsf{varD} | - \vdash M:\mathsf{com}$ does not have \emph{timing leaks}
if the initial value of the high-security variable $h$ does not influence the number of reduction steps of $M$.
More formally, we have:
\begin{definition}
A semi-closed term $h:\mathsf{varD} | - \vdash M:\mathsf{com}$ has no \emph{timing leaks} if
\begin{equation} \label{for1}
\begin{array}{ll}
\forall s_1, s_2 \in St(\{h\}). & s_1(h) \neq s_2(h) \land \\
& h:\mathsf{varD} \vdash M,\mathrm{s_1} \rightsquigarrow^{n_1} \mathsf{skip},\mathrm{s_1}' \land
h:\mathsf{varD} \vdash M,\mathrm{s_2} \rightsquigarrow^{n_2} \mathsf{skip},\mathrm{s_2}' \\
& \implies n_1=n_2
\end{array}
\end{equation}
\end{definition}
\begin{definition} \label{int-split}
We say that a \emph{split term} $h:\mathsf{varD} | \Delta \vdash M:\mathsf{com}$
does not have timing leaks, where $\Delta=x_1 : T_1, \ldots, x_k :
T_k$, if for all closed terms $\vdash N_1 : T_1, \ldots, \vdash N_k : T_k$,
 we have that the term $h:\mathsf{varD} | - \vdash M[N_1/x_1, \ldots, N_k/x_k]:\mathsf{com}$
 does not have timing leaks.
 \end{definition}

The formula (\ref{for1}) can be replaced by an equivalent formula,
where instead of two evaluations of the same term  we can consider
only one evaluation of the sequential composition of the given
term with another its copy \cite{Bar}.
So sequential composition enables us to place these two evaluations one after the other.
Let $h:\mathsf{varD} \vdash M:\mathsf{com}$ be a term, we define
$M'$ to be $\alpha$-equivalent to $M[h'/h]$  where all bound variables are suitable renamed.
The following can be shown:
$h \vdash M,\mathrm{s_1} \rightsquigarrow^{n} \mathsf{skip},\mathrm{s_1}' \land
h' \vdash M',\mathrm{s_2} \rightsquigarrow^{n'} \mathsf{skip},\mathrm{s_2}'$ iff
$h, h' \vdash M ; M',\mathrm{s_1} \otimes \mathrm{s_2} \rightsquigarrow^{n+n'} \mathsf{skip; skip},\mathrm{s_1}' \otimes
\mathrm{s_2}'$.
In this way, we provide an alternative definition to formula (\ref{for1}) as follows.
We say that a semi-closed term $h | - \vdash M:T$ has no \emph{timing leaks} if
\begin{equation} \label{for2}
\begin{array}{ll}
\forall s_1 \in St(\{h\}), s_2 \in St(\{h'\}). & \ s_1(h) \neq s_2(h') \land \\
& h, h' \vdash M ; M',\mathrm{s_1} \otimes \mathrm{s_2} \rightsquigarrow^{n_1} \mathsf{skip} ; M',\mathrm{s_1}' \otimes \mathrm{s_2}
\rightsquigarrow^{n_2} \mathsf{skip; skip},\mathrm{s_1}' \otimes \mathrm{s_2}' \\
&\implies n_1=n_2
\end{array}
\end{equation}

\section{Algorithmic Slot-Game Semantics} \label{game}

We now show how slot-game semantics for IA$_2$ can be represented algorithmically
by regular-languages.
In this approach, types are interpreted as games, which have two participants:
the Player representing the term, and the Opponent
representing its context. A game (arena) is defined by means of a set of moves,
each being either a question move or an answer move.
Each move represents an observable action that a term of
a given type can perform.
Apart from moves, another kind of action, called \emph{token} (slot), is used
to take account of quantitative aspects of terms.
It represents a payment that a participant needs to pay in order to use a resource such as time.
A computation is interpreted as a play-with-costs, which is given as a
sequence of moves and token-actions played by two participants in turns.
We will work here with complete plays-with-costs which represent the
observable effects along with incurred costs of a completed computation.
Then a term is modelled by a strategy-with-costs, which is a set of complete
plays-with-costs.
In the regular-language representation of game semantics \cite{GM},
types (arenas) are expressed as \emph{alphabets of moves},
computations (plays-with-costs) as \emph{words}, and terms (strategies-with-costs) as \emph{regular-languages}
over alphabets.

Each type $T$ is interpreted by an alphabet of moves $\mathcal A_{\sbr{T}}$,
which can be partitioned into two subsets of \emph{questions} $Q_{\sbr{T}}$ and \emph{answers} $A_{\sbr{T}}$.
For expressions, we have: $Q_{\sbr{\mathsf{exp}D}} = \{ \sq \}$ and $A_{\sbr{\mathsf{exp}D}} = D$,
i.e.\ there are  a question move $\sq$ to ask for the value of the expression and
values from $D$ are possible answers.
For commands, we have: $Q_{\sbr{\mathsf{com}}} = \{ \sr \}$ and $A_{\sbr{\mathsf{com}}} = \{ \sd \}$,
i.e.\ there are a question move $\sr$ to initiate a command and an answer move $\sd$ to signal
successful termination of a command.
For variables, we have: $Q_{\sbr{\mathsf{var}D}} = \{ \srd, \sw(a) \, \mid \, a
\in D \}$ and $A_{\sbr{\mathsf{var}D}} = D \cup \{ \sok \}$, i.e.\ there are moves for writing to the variable, $\sw(a)$,
acknowledged by the move $\sok$, and for reading from the variable, we have a question move $\srd$,
and an answer to it can be any value from $D$.
For function types, we have $ \mathcal{A}_{\sbr{B_1^{1} \to \ldots \to B_k^{k} \to B}} = \sum_{1 \leq i \leq k} \mathcal{A}_{\sbr{B_i}}^{i} + \mathcal{A}_{\sbr{B}}$, where $+$ means a disjoint
union of alphabets. We will use superscript tags to keep
record from which type of the disjoint union each move comes from.
We denote the token-action by $\circled{\$}$. A sequence of $n$ token-actions $\circled{\$}$
will be written as $\circled{n}$.

\newcommand\trnc[2]{#1\upharpoonright #2}

For any ($\beta$-normal) term we define a
regular language specified by an \emph{extended regular expression} $R$.
Apart from the standard operations for generating regular expressions, we will use
some more specific operations.
We define composition of regular expressions $R$ defined over alphabet $\mathcal{A}^{1} + \mathcal{B}^{2} + \{ \circled{\$} \}$
and $S$ over $\mathcal{B}^{2} + \mathcal{C}^{3} + \{ \circled{\$} \}$ as follows:
\begin{equation*}
\begin{array}{l}
R \comp_{\mathcal{B}^{2}} S = \{ w \big[ s /  a^{2} \cdot b^{2} \big] \ \mid  \ w \in S,
 a^{2} \cdot s \cdot b^{2} \in R \}
\end{array}
\end{equation*}
where $R$ is a set of words of the form $a^{2} \cdot s \cdot b^{2}$,
such that $a^{2}$, $b^{2} \in \mathcal{B}^{2}$ and $s$ contains only letters
from $\mathcal A^{1}$ and $\{ \circled{\$} \}$. 
Notice that the composition is defined over $\mathcal{A}^{1} + \mathcal{C}^{3} + \{ \circled{\$} \}$,
and all letters of $\mathcal{B}^{2}$ are hidden.
The shuffle operation  $R \bowtie S$
generates the set of all possible interleavings from words of $R$ and $S$, and
the restriction operation $R \mid_{\mathcal A'}$ ($R$ defined over $\mathcal A$ and $\mathcal A' \subseteq \mathcal A$)
removes from words of $R$ all letters from $\mathcal A'$.

If $w$, $w'$ are words, $m$ is a move, and $R$ is a regular expression,
 define $m \cdot w \smallfrown w' = m \cdot w' \cdot w$,
and $R \smallfrown w' = \{ w \smallfrown w' \mid w \in R \}$.
Given a word with costs $w$ defined over $\mathcal{A}+\{ \circled{\$} \}$, we define
the underlying word of $w$ as $w^{\dag}=w \mid_{\{\circled{\$}\}}$,
and the cost of $w$ as $w \mid_{\mathcal{A}}=\circled{n}$, which we denote as $|w|=n$.

The regular expression for $\Gamma \vdash M : T$ is denoted $\sbr{\Gamma \vdash M : T}$
and is defined over the alphabet $\mathcal A_{\sbr{\Gamma \vdash T}} = \big( \sum_{x : T' \in
\Gamma} \mathcal{A}_{\sbr{T'}}^{x} \big) + \mathcal{A}_{\sbr{T}} + \{ \circled{\$} \}$.
Every word in $\sbr{\Gamma \vdash M : T}$ corresponds to a complete play-with-costs in the strategy-with-costs
for $\Gamma \vdash M : T$.

Free identifiers $x \in \Gamma$ are interpreted by the copy-cat regular expressions,
which contain all possible computations that terms of that type can have.
Thus they provide the most general closure of an open term.
\[
\begin{array}{@{}l}
 \sbr{\Gamma, x:B_1^{x,1} \to \ldots B_k^{x,k} \to B^{x} \vdash x:B_1^{1} \to \ldots B_k^{k} \to B} = \qquad \qquad \\
 \qquad  \qquad \qquad \qquad \displaystyle{\sum_{q \in Q_{\sbr{B}}}} q \cdot q^{x} \cdot \big( \displaystyle{\sum_{1 \leq i \leq k}} ( \sum_{q_1 \in Q_{\sbr{B_i}}} q_{1}^{x,i} \cdot q_{1}^{i} \cdot \sum_{a_1 \in A_{\sbr{B_i}}} a_{1}^{i} \cdot a_{1}^{x,i} ) \big)^* \cdot \sum_{a \in A_{\sbr{B}}} a^{x} \cdot a
\end{array}
\]
When a first-order non-local function is called, it may evaluate any of
its arguments, zero or more times, and then it can return any value from
its result type as an answer.
For example, the term $\sbr{\Gamma, x:\mathsf{exp}D^{x} \vdash x:\mathsf{exp}D}$ is modelled by
the regular expression: $\sq \cdot \sq^x \cdot \sum_{n \in D} n^x \cdot n$.

The linear application is defined as:
\[
\sbr{\Gamma, \Delta \vdash M \, N : T} = \sbr{\Delta \vdash N : B^1} \comp_{\mathcal A_{\sbr{B}}^1} \sbr{\Gamma \vdash M : B^1 \to T}
\]
Since we work with terms in $\beta$-normal form, function application can occur
only when the function term is a free identifier. In this case, the interpretation
is the same as above except that we add the cost $k_{app}$ corresponding to function application.
Notice that $k_{app}$ denotes certain number of $\circled{\$}$ units that
are needed for a function application to take place.
The contraction $\sbr{\Gamma, x:T^{x} \vdash M[x/x_1,x/x_2] : T'}$ is obtained from
$\sbr{\Gamma, x_1:T^{x_1}, x_2:T^{x_2} \vdash M : T'}$, such that the moves associated
with $x_1$ and $x_2$ are de-tagged so that they represent actions associated with $x$.

To represent local variables, we first need to define
a (storage) `cell' regular expression $\mathsf{cell}_v$ which imposes the good variable behaviour on the local
variable. So $\mathsf{cell}_v$ responds to each $\sw(n)$ with $\sok$, and plays
the most recently written value in response to $\srd$, or if no value
has been written yet then answers the $\srd$ with the initial value $v$. Then we have:
\[
\begin{array}{l}
\mathsf{cell}_v = (\srd \cdot v)^* \cdot \big( \displaystyle{\sum_{n \in D}} \sw(n) \cdot \sok \cdot (\srd \cdot n)^*   \big)^* \\
\sbr{\Gamma,x:\mathsf{var}D \vdash M} \circ  \mathsf{cell}_{v}^{x} = \big( \sbr{\Gamma,x:\mathsf{var}D \vdash M} \cap
( \mathsf{cell}_{v}^{x} \bowtie (\mathcal A_{\sbr{\Gamma \vdash B}}+\circled{\$})^* ) \big) \mid_{\mathcal A_{\sbr{\mathsf{var}D}}^x} \\
\sbr{\Gamma \vdash \mathsf{new}_D \, x \aasg v \, \mathsf{in} \, M} = \sbr{\Gamma,x:\mathsf{var}D \vdash M} \circ  \mathsf{cell}_{v}^{x} \smallfrown k_{var}
\end{array}
\]
Note that all actions associated with $x$ are hidden away in the model of $\mathsf{new}$,
since $x$ is a local variable and so not visible outside of the term.

Language constants and constructs are interpreted as follows:
\[ \begin{array}{l}
\sbr{\mathsf{v:exp}D}=\{\sq \cdot v \} \quad
\sbr{\mathsf{skip:com}}=\{ \sr \cdot \sd \} \ \
\sbr{\mathsf{diverge:com}}\!=\! \emptyset \\
\sbr{\mathsf{op:exp}D^1 \times \mathsf{exp}D^2 \to \mathsf{exp}D'}
= \sq \cdot k_{op} \cdot \sq^1 \cdot \sum_{m \in D} m^1 \cdot q^2 \! \cdot \! \sum_{n \in D} n^2 \! \cdot \! (m \, op \, n)   \\
\sbr{\mathsf{;:com}^1 \to \mathsf{com}^2 \to \mathsf{com}}
= \sr \cdot \sr^1 \cdot \sd^1 \cdot k_{seq} \cdot \sr^2 \cdot \sd^2 \cdot \sd \\

\sbr{\mathsf{if:expbool}^1 \to \mathsf{com}^2 \to \mathsf{com}^3 \to \mathsf{com}}
= \sr \cdot k_{if} \cdot \sq^1 \cdot tt^1 \cdot \sr^2 \cdot \sd^2 \cdot \sd \ + \\
\qquad \qquad \qquad \qquad \qquad \qquad \qquad \qquad \quad \sr \cdot k_{if} \cdot \sq^1 \cdot ff^1 \cdot \sr^3 \cdot \sd^3 \cdot \sd \\
\sbr{\mathsf{while:expbool}^1 \to \mathsf{com}^2 \to \mathsf{com}}
= \sr \cdot ( k_{if} \cdot \sq^1 \cdot tt^1 \cdot \sr^2 \cdot \sd^2 )^* \cdot k_{if} \cdot \sq^1 \cdot ff^1 \cdot \sd \\
\sbr{\mathsf{:=:var}D^1 \to \mathsf{exp}D^2 \to \mathsf{com}}
= \sum_{n \in D} \sr \cdot k_{asg} \cdot \sq^2 \cdot n^2 \cdot \sw(n)^1 \cdot \sok^1 \cdot \sd \\
\sbr{\mathsf{!:var}D^1 \to \mathsf{exp}D}
= \sum_{n \in D} \sq \cdot k_{der}  \cdot  \srd^1 \cdot n^1 \cdot n
\end{array}
\]
Although it is not important at what position in a word costs are placed,
for simplicity we decide to attach them just after the initial move.
The only exception is the rule for sequential composition ($;$), where
the cost is placed between two arguments. The reason will be explained later on.

We now show how slot-games model relates to the operational semantics.
First, we need to show how to represent the state explicitly in the model.
A $\Gamma$-state $\mathrm{s}$ is interpreted as follows:
\[
\sbr{\mathrm{s} : \mathsf{var}D_1^{x_1} \times \ldots \times \mathsf{var}D_k^{x_k}} =
 \mathsf{cell}_{\mathrm{s}(x_1)}^{x_1} \bowtie  \ldots \bowtie \mathsf{cell}_{\mathrm{s}(x_k)}^{x_k}
\]
The regular expression $\sbr{s}$ is defined over the alphabet $\mathcal A_{\sbr{\mathsf{var}D_1}}^{x_1}+\ldots+\mathcal A_{\sbr{\mathsf{var}D_k}}^{x_k}$,
and words in $\sbr{s}$ are such that projections onto $x_i$-component are the same
as those of suitable initialized $\mathsf{cell}_{\mathrm{s}(x_i)}$ strategies.
Note that $\sbr{s}$ is a regular expression without costs.
The interpretation of $\Gamma \vdash M : \mathsf{com}$ at state $\mathrm{s}$
is:
\[ \sbr{\Gamma\vdash M} \circ \sbr{\mathrm{s}} = \big( \sbr{\Gamma \vdash M} \cap
( \sbr{\mathrm{s}} \bowtie (\mathcal A_{\sbr{\mathsf{com}}}+\circled{\$})^* ) \big) \mid_{\mathcal A_{\sbr{\Gamma}}}
\]
which is defined over the alphabet $\mathcal A_{\sbr{\mathsf{com}}} + \{\circled{\$}\}$.
The interpretation $\sbr{\Gamma\vdash M} \circ \sbr{s}$ can be studied more closely by considering words
in which moves from $\mathcal A_{\sbr{\Gamma}}$ are not hidden.
Such words are called \emph{interaction sequences}.
For any interaction sequence $\sr \cdot t \cdot \sd \bowtie \circled{n}$ from $\sbr{\Gamma\vdash M} \circ \sbr{\mathrm{s}}$,
where $t$ is an even-length word over $\mathcal A_{\sbr{\Gamma}}$, we say that
it leaves the state $\mathrm{s}'$ if the last write moves in each
$x_i$-component are such that $x_i$ is set to the value $\mathrm{s}'(x_i)$.
For example, let $\mathrm{s}=(x \mapsto 1, y \mapsto 2)$, then the following interaction:
$\sr \cdot \sw(5)^y \cdot \sok^y \cdot \srd^x \cdot 1^x \cdot \sd$
leaves the state $\mathrm{s}'=(x \mapsto 1, y \mapsto 5)$.
Any two-move word of the form: $\sr^{x_i} \cdot n^{x_i}$ or  $\sw(n)^{x_i} \cdot \sok^{x_i}$
will be referred to as \emph{atomic state operation} of $\mathcal A_{\sbr{\Gamma}}$.
The following results are proved in \cite{Gh05} for the full ICA (IA plus parallel composition and semaphores),
but they also hold for the restricted fragment of it.

\begin{proposition}
If $\Gamma\vdash M:\{\mathsf{com,expD}\}$ and $\Gamma\vdash M,s \longrightarrow^n M',s'$, then
for each interaction sequence $i \cdot t$ from $\sbr{\Gamma\vdash M'} \circ \sbr{s'}$
($i$ is an initial move) there exists an interaction $i \cdot t_a \cdot t \smallfrown \circled{n} \in \sbr{\Gamma\vdash M} \circ \sbr{s}$
such that $t_a$ is an empty word or an atomic state operation of $\mathcal A_{\sbr{\Gamma}}$
which leaves the state $s'$.
\end{proposition}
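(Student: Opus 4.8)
The plan is to argue by induction on the derivation of the one-step reduction $\Gamma \vdash M, s \longrightarrow^n M', s'$. Since every reduction is obtained by firing one basic rule inside an evaluation context, I would first decompose $M = E[R]$, where $R$ is the redex actually rewritten by a rule of Table~\ref{red_rul} (or one of the cost-free rules) and $M' = E[R']$, and then induct on the structure of the evaluation context $E$. The base case is $E = [-]$, i.e.\ $M$ is itself the redex; the inductive cases run through the context formers $E'M$, $E';M$, $\mathsf{skip};E'$, $E'\,\mathsf{op}\,M$, $v\,\mathsf{op}\,E'$, $\mathsf{if}\,E'\ldots$, $M:=E'$, $E':=v$, and $!E'$. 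For the base types $\mathsf{com}$ and $\mathsf{exp}D$ there is a single initial question ($\sr$, resp.\ $\sq$), so the distinguished move $i$ is unambiguous and the decomposition $i \cdot t$ is well defined.

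For the base case I would simply compute, for each basic rule, the two regular expressions $\sbr{R} \circ \sbr{s}$ and $\sbr{R'} \circ \sbr{s'}$ from the clauses for the language constructs and the cell/state interpretation, and read off $t_a$ and the cost. The store-free rules ($\mathsf{op}$, $\mathsf{if}$, $\beta$-reduction, $\mathsf{skip};\mathsf{skip}$, $\mathsf{new}\ldots\mathsf{skip}$) give $s' = s$ and an empty $t_a$, and one checks that the unique extra cost is exactly $\circled{n}$ with $n \in \{k_{op}, k_{if}, k_{app}, k_{seq}, k_{new}\}$; e.g.\ for $\mathsf{skip};\mathsf{skip} \longrightarrow^{k_{seq}} \mathsf{skip}$ composition and hiding collapse the interpretation of $;$ to $\sr \cdot k_{seq} \cdot \sd$, which is exactly $\sr \cdot \sd \smallfrown \circled{k_{seq}}$. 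The two store-touching rules give the non-empty atomic state operations: dereferencing $!x$ at $(x \mapsto v)$ yields $t_a = \srd^x \cdot v^x$ (state unchanged, so it leaves $s' = s$) and cost $k_{der}$, while $x \aasg v'$ yields $t_a = \sw(v')^x \cdot \sok^x$ (which leaves $x$ set to $v'$, i.e.\ the state $s'$) and cost $k_{asg}$. In each case the resulting word is visibly of the form $i \cdot t_a \cdot t \smallfrown \circled{n}$.

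For the inductive step I would exploit compositionality. Each context former interprets $E[R]$ by plugging $\sbr{E'[R]}$ into a fixed combinator strategy (the interpretation of $;$, $\mathsf{op}$, $:=$, $!$, $\mathsf{if}$, or of linear application together with the copy-cat for the head identifier), and this combinator is identical whether its argument is $E'[R]$ or $E'[R']$. Because the evaluation contexts are designed so that the hole is the next thing evaluated, the initial move $i$ of $E[R]$ propagates, after the fixed opening moves of the combinator, into the initial move of the sub-interaction coming from $\sbr{E'[R]} \circ \sbr{s}$; applying the induction hypothesis to $E'[R] \longrightarrow^n E'[R']$ supplies the matching prefix $t_a$ and the cost $\circled{n}$ for that sub-interaction, and I would then show these survive the surrounding composition and hiding. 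The point to check is that $t_a$, being a move-pair over $\mathcal A_{\sbr{\Gamma}}$, is never hidden by the combinator (which only hides its own tagged moves), so it still appears contiguously right after $i$, and that the $\smallfrown \circled{n}$ bookkeeping commutes with composition and hiding so that the added cost is unchanged.

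I expect the inductive step to be the main obstacle, and within it the cost accounting. The delicate cases are sequencing, where $k_{seq}$ is deliberately placed between the two arguments, and the arithmetic and conditional combinators, where one must confirm that hiding the combinator's internal moves leaves the hypothesis-supplied prefix exactly in position $i \cdot t_a$ and the cost exactly $\circled{n}$ just after the initial move, matching the convention that costs sit immediately after $i$. Making the informal phrases \emph{``$i$ propagates into the initial move of the sub-interaction''} and \emph{``$\smallfrown$ commutes with composition''} precise is where the real work lies; everything else is a finite, rule-by-rule verification.
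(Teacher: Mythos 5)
First, a point of reference: the paper itself contains no proof of this proposition; it is imported from \cite{Gh05}, where it is proved for full ICA, with the remark that it carries over to the IA$_2$ fragment. So your proposal can only be judged against the standard argument of that source, and in spirit your plan is the right kind of argument: induction on how the one-step reduction is derived, a finite rule-by-rule check of the redexes, and a compositionality argument showing that the prefix $t_a$ and the cost $\circled{n}$ supplied by the induction hypothesis survive plugging into the surrounding combinator (your observation that the combinators hide only their own tagged moves, never the $\mathcal A_{\sbr{\Gamma}}$-moves, is exactly the right point, as is the remark about the placement of $k_{seq}$).

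There is, however, a genuine structural gap. Your induction rests on the claim that every reduction fires a basic (or cost-free) rule inside an evaluation context, $M=E[R]$, $M'=E[R']$. That is false for this language: the congruence rule for local variables, which derives $\Gamma \vdash \mathsf{new}_D \, x \aasg v \, \mathsf{in} \, M, \mathrm{s} \longrightarrow \mathsf{new}_D \, x \aasg v' \, \mathsf{in} \, M'[x/y], \mathrm{s'}$ from a reduction of the body $\Gamma, y \vdash M[y/x], \mathrm{s} \otimes (y \mapsto v) \longrightarrow M', \mathrm{s'} \otimes (y \mapsto v')$, is a separate inductive rule, and $\mathsf{new}$ is not among the evaluation-context formers. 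A reduction such as $\mathsf{new}_D \, x \aasg 0 \, \mathsf{in} \, (x \aasg 1), \mathrm{s} \longrightarrow \mathsf{new}_D \, x \aasg 1 \, \mathsf{in} \, \mathsf{skip}, \mathrm{s}$ cannot be written as $E[R] \longrightarrow E[R']$, so your induction on the structure of $E$ never reaches it. This missing case is not a routine copy of the others; it is precisely where the statement's disjunction ``$t_a$ is empty \emph{or} an atomic state operation'' does its work. Applying the induction hypothesis to the premise yields a prefix $\tilde t_a$ that may be an atomic operation on the \emph{local} variable $y$; in the conclusion, the semantics of $\mathsf{new}$ composes with $\mathsf{cell}_v^{x}$ and hides all $x$-moves, so that prefix vanishes, and you must argue that the hidden word is still a legal interaction, that the cost $\circled{n}$ survives the hiding in the correct position just after the initial move, and that $t_a$ then degenerates to the empty word while the residual global state is still $\mathrm{s}'$ (the local component being absorbed by the cell). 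Your stated worry that Γ-moves are ``never hidden'' is exactly what fails to apply here. The fix is to induct on the derivation of the reduction judgement, with the in-context rule and the $\mathsf{new}$ rule as the two inductive cases. A smaller wrinkle worth acknowledging: the $\beta$-rule with cost $k_{app}$ passes through terms that are not $\beta$-normal, whereas the paper's regular-expression semantics is defined only on $\beta$-normal forms; in \cite{Gh05} this is immaterial because the full game model interprets $\lambda$-abstraction and application directly, but your base case for $k_{app}$ silently assumes an interpretation the paper never defines.
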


\begin{proposition}
If $\Gamma \vdash M, \mathrm{s} \rightsquigarrow^{n} M', \mathrm{s'}$ then
$\sbr{\Gamma\vdash M'} \circ \sbr{s'} \bowtie \circled{n} \subseteq \sbr{\Gamma\vdash M} \circ \sbr{s}$.
\end{proposition}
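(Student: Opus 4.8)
The plan is to induct on the derivation of $\Gamma\vdash M,\mathrm{s}\rightsquigarrow^{n}M',\mathrm{s}'$, which is built from exactly the two rules defining $\rightsquigarrow$: the rule that promotes a single small-step $\longrightarrow^{n}$ to a $\rightsquigarrow^{n}$, and the rule that composes two $\rightsquigarrow$-derivations and adds their costs. The first rule supplies the base case, to be discharged by the preceding single-step proposition; the second supplies the inductive step, which will be pure algebra of the shuffle operation.

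For the base case we are given a single step $\Gamma\vdash M,\mathrm{s}\longrightarrow^{n}M',\mathrm{s}'$ and must show $\sbr{\Gamma\vdash M'}\circ\sbr{\mathrm{s}'}\bowtie\circled{n}\subseteq\sbr{\Gamma\vdash M}\circ\sbr{\mathrm{s}}$. First I would take an arbitrary complete play $v\in\sbr{\Gamma\vdash M'}\circ\sbr{\mathrm{s}'}$ and unfold the definition of the state-composition $\circ\,\sbr{\mathrm{s}'}$: by construction $v$ is the image under the restriction $\mid_{\mathcal A_{\sbr{\Gamma}}}$ of some interaction sequence $i\cdot t$ underlying $\sbr{\Gamma\vdash M'}\circ\sbr{\mathrm{s}'}$, where $i$ is the initial move. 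Applying the single-step proposition to $i\cdot t$ yields an interaction $i\cdot t_a\cdot t\smallfrown\circled{n}=i\cdot\circled{n}\cdot t_a\cdot t$ underlying $\sbr{\Gamma\vdash M}\circ\sbr{\mathrm{s}}$, in which $t_a$ is empty or a single atomic state operation of $\mathcal A_{\sbr{\Gamma}}$ that leaves the state $\mathrm{s}'$. Restricting by $\mid_{\mathcal A_{\sbr{\Gamma}}}$ deletes $t_a$ (it lies entirely in $\mathcal A_{\sbr{\Gamma}}$) and returns exactly $v$ with the block $\circled{n}$ inserted just after the initial move, so this word lies in $\sbr{\Gamma\vdash M}\circ\sbr{\mathrm{s}}$. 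The state bookkeeping is handled for free: the clause ``$t_a$ leaves the state $\mathrm{s}'$'' is precisely what makes the produced interaction consistent with the cell for $\mathrm{s}$ rather than $\mathrm{s}'$.

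The one genuine subtlety in the base case, and what I expect to be the main obstacle, is the placement of the $n$ new token-actions: the single-step proposition deposits them immediately after the initial move, whereas the shuffle $\bowtie\circled{n}$ on the left ranges over all insertion positions. To close this gap I would appeal to the convention adopted just after the definition of the language constructs, that the position of cost tokens in a play is immaterial, so the strategies-with-costs may be taken up to repositioning of $\circled{\$}$; producing one representative of total cost $n$ then suffices to cover every element of $v\bowtie\circled{n}$. Making this precise, via an explicit closure lemma asserting that $\sbr{\Gamma\vdash M}\circ\sbr{\mathrm{s}}$ is invariant under commuting token-actions past moves, is the step needing the most care; everything else in the base case is bookkeeping driven by the single-step proposition.

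For the inductive step, suppose the derivation ends with the composition rule, so that $\Gamma\vdash M,\mathrm{s}\rightsquigarrow^{n_1}M'',\mathrm{s}''$ and $\Gamma\vdash M'',\mathrm{s}''\rightsquigarrow^{n_2}M',\mathrm{s}'$ with $n=n_1+n_2$. The induction hypothesis gives $\sbr{\Gamma\vdash M''}\circ\sbr{\mathrm{s}''}\bowtie\circled{n_1}\subseteq\sbr{\Gamma\vdash M}\circ\sbr{\mathrm{s}}$ and $\sbr{\Gamma\vdash M'}\circ\sbr{\mathrm{s}'}\bowtie\circled{n_2}\subseteq\sbr{\Gamma\vdash M''}\circ\sbr{\mathrm{s}''}$. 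I would then use three routine facts about shuffle: that it is monotone with respect to $\subseteq$, that it is associative and commutative, and that on the one-letter cost alphabet $\circled{n_1}\bowtie\circled{n_2}=\circled{n_1+n_2}$ (the only interleaving of $n_1$ and $n_2$ identical tokens). Chaining these yields $\sbr{\Gamma\vdash M'}\circ\sbr{\mathrm{s}'}\bowtie\circled{n}=\bigl(\sbr{\Gamma\vdash M'}\circ\sbr{\mathrm{s}'}\bowtie\circled{n_2}\bigr)\bowtie\circled{n_1}\subseteq\sbr{\Gamma\vdash M''}\circ\sbr{\mathrm{s}''}\bowtie\circled{n_1}\subseteq\sbr{\Gamma\vdash M}\circ\sbr{\mathrm{s}}$, which closes the induction.
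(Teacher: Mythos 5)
You cannot compare your argument against the paper's proof line by line, because the paper gives none: this proposition, together with Proposition~1, Consistency and Adequacy, is imported from Ghica's slot-games paper \cite{Gh05}, with the remark that the proofs for full ICA restrict to the IA$_2$ fragment. Your reconstruction --- induction on the derivation of $\rightsquigarrow^{n}$, the base case discharged by the single-step Proposition~1 followed by hiding of $\mathcal{A}_{\sbr{\Gamma}}$ (which erases $t_a$), and the inductive step by monotonicity and associativity of $\bowtie$ together with $\circled{n_1}\bowtie\circled{n_2}=\circled{n_1+n_2}$ --- is the natural argument and is sound; what it buys over the paper's citation is an explicit record of exactly which facts the appeal to \cite{Gh05} depends on, and it makes clear that nothing beyond Proposition~1 and elementary shuffle algebra is needed for the restricted fragment.

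One caution about the point you yourself single out as the crux. The ``closure lemma'' you propose --- that $\sbr{\Gamma\vdash M}\circ\sbr{\mathrm{s}}$ is literally invariant under commuting token-actions past moves --- is false for the paper's concrete regular expressions, which pin every cost to a fixed position: for instance $\sbr{\vdash\mathsf{skip};\mathsf{skip}}=\{\sr\cdot k_{seq}\cdot\sd\}$, which contains neither $k_{seq}\cdot\sr\cdot\sd$ nor $\sr\cdot\sd\cdot k_{seq}$. Indeed, applying the single step $\vdash\mathsf{skip};\mathsf{skip},\mathrm{s}\longrightarrow^{k_{seq}}\mathsf{skip},\mathrm{s}$ shows that the proposition itself fails as a literal language inclusion whenever $k_{seq}>0$, since $\sbr{\vdash\mathsf{skip}}\bowtie\circled{k_{seq}}$ contains $\sr\cdot\sd\cdot k_{seq}$. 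So the resolution is not a lemma one can prove about the literal model, but the other reading you mention: the statement (and the whole development) must be taken with plays-with-costs identified up to repositioning of $\circled{\$}$ --- equivalently, with words compared only through $w^{\dag}$ and $|w|$, exactly as the improvement order $\gtrsim$ on regular expressions does --- which is what the paper's convention that ``it is not important at what position in a word costs are placed'' amounts to. Once the proposition is read modulo that identification, your base case (which produces the canonical representative $v\smallfrown\circled{n}$) and your inductive step go through unchanged.
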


\begin{theorem}[Consistency] \label{cons}
  If $M,\mathrm{s} \Downarrow^n$ then
  $\exists w \in \sbr{\Gamma\vdash M} \circ \sbr{s}$ such that $|w|=n$ and $w^{\dag}=\sr \cdot \sd$ .
\end{theorem}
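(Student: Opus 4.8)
The plan is to derive the required play-with-costs directly from Proposition~2, exploiting the fact that a terminating command reduces to $\mathsf{skip}$. By the definition of termination, the hypothesis $M,\mathrm{s}\Downarrow^{n}$ unfolds to $\Gamma\vdash M,\mathrm{s}\rightsquigarrow^{n}\mathsf{skip},\mathrm{s}'$ for some $\Gamma$-state $\mathrm{s}'$, where $\Gamma$ is the ambient $\mathsf{var}$-context. This is exactly the shape to which Proposition~2 applies, taking the residual term to be $\mathsf{skip}$.

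First I would instantiate Proposition~2 with $M'=\mathsf{skip}$ and $\mathrm{s}'$, obtaining the inclusion
\[
\sbr{\Gamma\vdash\mathsf{skip}}\circ\sbr{\mathrm{s}'}\bowtie\circled{n}\ \subseteq\ \sbr{\Gamma\vdash M}\circ\sbr{\mathrm{s}}.
\]
It then suffices to produce a single word on the left-hand side whose underlying word is $\sr\cdot\sd$ and whose cost is $n$; such a word automatically lies in $\sbr{\Gamma\vdash M}\circ\sbr{\mathrm{s}}$ and serves as the witness $w$ demanded by the theorem.

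The next step is to evaluate $\sbr{\Gamma\vdash\mathsf{skip}}\circ\sbr{\mathrm{s}'}$. Since $\sbr{\mathsf{skip}:\mathsf{com}}=\{\sr\cdot\sd\}$ contains no moves tagged by an identifier of $\Gamma$, the command performs no atomic state operation, so the defining equation for the composition with a state (intersection with $\sbr{\mathrm{s}'}\bowtie(\mathcal A_{\sbr{\mathsf{com}}}+\circled{\$})^{*}$, followed by hiding $\mathcal A_{\sbr{\Gamma}}$) collapses to the singleton $\{\sr\cdot\sd\}$, independently of $\mathrm{s}'$. Hence the left-hand side equals $\{\sr\cdot\sd\}\bowtie\circled{n}$, the set of interleavings of $\sr\cdot\sd$ with the $n$ token-actions. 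Choosing, for instance, $w=\sr\cdot\circled{n}\cdot\sd$, deleting the token-actions gives $w^{\dag}=\sr\cdot\sd$ and counting them gives $|w|=n$, which closes the argument.

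The only genuinely delicate point is the computation $\sbr{\Gamma\vdash\mathsf{skip}}\circ\sbr{\mathrm{s}'}=\{\sr\cdot\sd\}$: one must verify that intersecting with $\sbr{\mathrm{s}'}\bowtie(\mathcal A_{\sbr{\mathsf{com}}}+\circled{\$})^{*}$ does not discard the play $\sr\cdot\sd$ and that the subsequent restriction leaves it intact. Because $\mathsf{skip}$ issues no $\srd$ or $\sw(n)$ actions, its single play is compatible with every cell behaviour recorded in $\sbr{\mathrm{s}'}$ and carries no $\Gamma$-moves to hide, so this is a routine verification. Everything else amounts to bookkeeping about the shuffle $\bowtie$, the underlying-word map $(-)^{\dag}$, and the cost measure $|\cdot|$.
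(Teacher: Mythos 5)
Your proof is correct, and it is essentially the argument the paper intends: the paper does not prove Theorem~\ref{cons} itself but defers to \cite{Gh05}, and Proposition~2 is stated there precisely as the simulation lemma from which Consistency follows. Your instantiation of Proposition~2 at $M'=\mathsf{skip}$, together with the computation $\sbr{\Gamma\vdash\mathsf{skip}}\circ\sbr{\mathrm{s}'}=\{\sr\cdot\sd\}$ (which holds because every $\mathsf{cell}$ expression contains the empty word, so the intersection with $\sbr{\mathrm{s}'}\bowtie(\mathcal{A}_{\sbr{\mathsf{com}}}+\circled{\$})^{*}$ retains $\sr\cdot\sd$ and the hiding of $\mathcal{A}_{\sbr{\Gamma}}$ leaves it unchanged), yields a witness $w\in\{\sr\cdot\sd\}\bowtie\circled{n}\subseteq\sbr{\Gamma\vdash M}\circ\sbr{\mathrm{s}}$ with $w^{\dag}=\sr\cdot\sd$ and $|w|=n$, exactly as required.
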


\begin{theorem}[Computational Adequacy] \label{adeq}
  If $\exists w \in \sbr{\Gamma\vdash M} \circ \sbr{s}$ such that $|w|=n$ and $w^{\dag}=\sr \cdot \sd$, then
  $M,\mathrm{s} \Downarrow^n$.
\end{theorem}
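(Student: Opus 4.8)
The plan is to prove Theorem~\ref{adeq} as an exact converse to Consistency (Theorem~\ref{cons}), by showing that each reduction step of $M$ is \emph{reflected} in the model with the matching cost, and then iterating this along the whole reduction sequence. Since the model is defined for semi-closed terms in $\beta$-normal form and $\Gamma$ is a $\mathsf{var}$-context, there are no applied $\lambda$-abstractions and no free function identifiers to contend with: every free identifier is a variable whose behaviour is already fixed by composing with $\sbr{s}$. The only source of unbounded computation is therefore the $\mathsf{while}$ construct, which I would eliminate first by a continuity/approximation argument. Writing $M_k$ for the term obtained from $M$ by replacing each loop $\mathsf{while}\,b\,\mathsf{do}\,N$ with its $k$-fold unfolding terminated by $\mathsf{diverge}$, one has $\sbr{\Gamma\vdash M}=\bigcup_k\sbr{\Gamma\vdash M_k}$, so the given finite word $w$ already lies in some $\sbr{\Gamma\vdash M_k}\circ\sbr{s}$. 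Because a converging computation never reaches the trailing $\mathsf{diverge}$, it suffices to prove the theorem for the \emph{while-free} term $M_k$, for which reduction is strongly normalising (a routine term-size measure), so the deterministic reduction from $(M_k,s)$ is a finite sequence.

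The core step is a single-step \emph{reflection lemma}: if $w'$ is a complete play in $\sbr{\Gamma\vdash M}\circ\sbr{s}$ with $|w'|=n$ and $(M,s)$ reduces in one in-context step to $(M',s')$ with cost $c$, then there is a complete play $w''$ in $\sbr{\Gamma\vdash M'}\circ\sbr{s'}$ with $|w''|=n-c$. This is the genuine new content of adequacy: it is the converse of the two preceding propositions, which only lift operational behaviour \emph{into} the model. I would prove it by case analysis on the redex picked out by the evaluation context, using the defining equation of each construct. For the pure control rules ($\mathsf{op}$, $\mathsf{if}$, sequencing, $\beta$) the defining regular expression prefixes the residual play by exactly the slot block $\circled{k_{op}}$, $\circled{k_{if}}$, $\circled{k_{seq}}$, $\ldots$, so peeling off that prefix yields $w''$ and subtracts precisely $c$ from the cost; for the state rules (dereferencing, assignment) the matching atomic state operation in the interaction sequence of $w'$, read off the $\mathsf{cell}$ for the relevant variable, is what gets consumed, updating $\sbr{s}$ to $\sbr{s'}$. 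The book-keeping of the hiding in the composition $\circ$ and of the $\mathsf{cell}$/state interaction is where the argument is most delicate, and this is the step I expect to be the main obstacle.

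Granting the reflection lemma, I would conclude by induction on the length of the finite deterministic reduction of the while-free term. Peeling off one step at a time produces a complete play in $\sbr{\Gamma\vdash M_i}\circ\sbr{s_i}$ of cost $n-(c_1+\cdots+c_i)$, and after the last step we reach a normal form $M_m$, for which there are only two possibilities. If $M_m=\mathsf{skip}$, then $\sbr{\mathsf{skip}}\circ\sbr{s_m}=\{\sr\cdot\sd\}$ is the unique complete play and has cost $0$, forcing $n=c_1+\cdots+c_m$ and hence $M,s\Downarrow^{n}$, exactly as required. If instead $M_m=E[\mathsf{diverge}]$ were stuck, then since $\sbr{\mathsf{diverge}}=\emptyset$ compositionality propagates emptiness through $E$ and no complete play over $\mathcal{A}_{\sbr{\mathsf{com}}}$ survives, contradicting the play of cost $n-(c_1+\cdots+c_m)$ just obtained; so this case cannot arise whenever a complete play exists. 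Combining the two cases gives $M,s\Downarrow^{n}$, completing the proof.
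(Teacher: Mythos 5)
First, a point of reference: the paper never proves this theorem at all. Consistency, Adequacy and the two propositions preceding them are imported from \cite{Gh05}, where they are proved for full ICA, with only the remark that they restrict to this fragment; so your attempt must be judged against the standard proof in that tradition rather than against anything in the text. By that standard your architecture is the right one --- it is essentially the classical Abramsky--McCusker-style adequacy argument: approximate $\mathsf{while}$ by finite unfoldings, use strong normalisation of the while-free fragment, prove a one-step reflection lemma (you are right that this is the genuinely new content, since Propositions 1 and 2 and Consistency all transport plays in the opposite direction, from the reducts into the model of $M$), and close with an induction whose endgame distinguishes $\mathsf{skip}$ from a term stuck at $E[\mathsf{diverge}]$, the latter excluded because $\sbr{\mathsf{diverge}}=\emptyset$ propagates through evaluation contexts. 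The final accounting is also correct: $\sbr{\mathsf{skip}}\circ\sbr{\mathrm{s}}=\{\sr\cdot\sd\}$ has cost $0$, forcing $n=c_1+\cdots+c_m$.

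There is, however, a genuine gap in your first step, specific to this paper's cost bookkeeping. The identity $\sbr{\Gamma\vdash M}=\bigcup_k\sbr{\Gamma\vdash M_k}$ holds for underlying words but \emph{not} for costs: your unfolding $\mathsf{if}\,b\,\mathsf{then}\,(N;\mathsf{unfold}_{k-1})\,\mathsf{else}\,\mathsf{skip}$ goes through the sequencing constant, so by $\sbr{;}$ each completed iteration contributes one $k_{seq}$ token, whereas $\sbr{\mathsf{while}}=\sr\cdot(k_{if}\cdot\sq^1\cdot tt^1\cdot\sr^2\cdot\sd^2)^*\cdot k_{if}\cdot\sq^1\cdot ff^1\cdot\sd$ contains no $k_{seq}$ inside the star. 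Hence a word $w\in\sbr{\Gamma\vdash M}\circ\sbr{\mathrm{s}}$ of cost $n$ whose run performs $j$ loop iterations lies in no $\sbr{\Gamma\vdash M_k}\circ\sbr{\mathrm{s}}$; the corresponding word there has cost $n+j\cdot k_{seq}$, and your induction would deliver $M,\mathrm{s}\Downarrow^{n+j\cdot k_{seq}}$ rather than $M,\mathrm{s}\Downarrow^{n}$. You cannot smooth this over, because the operational semantics sides with the unfolding: the rule for $\mathsf{while}$ rewrites it to $\mathsf{if}\,b\,\mathsf{then}\,(M;\mathsf{while}\,b\,\mathsf{do}\,M)\,\mathsf{else}\,\mathsf{skip}$, and each iteration really does pay $k_{seq}$ via $\mathsf{skip};\mathsf{skip}\longrightarrow^{k_{seq}}\mathsf{skip}$. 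Read literally, then, the paper's $\sbr{\mathsf{while}}$ undercounts, and Adequacy and Consistency fail for loops whenever $k_{seq}>0$; the definitions in \cite{Gh05} are arranged coherently, and a correct write-up of your argument must either insert $k_{seq}$ into the starred block of $\sbr{\mathsf{while}}$ or make the cost correction explicit in the unwinding lemma, and say which semantics it is proving adequacy for. Separately, your reflection lemma --- which you rightly flag as the main obstacle --- is where nearly all the remaining work lives: it needs a decomposition of $\sbr{\Gamma\vdash E[N]}$ along evaluation contexts together with the $\mathsf{cell}$-update bookkeeping for assignment and dereferencing, and until that is discharged the proposal is an outline rather than a proof.
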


We say that a regular expression $R$ is improved by $S$, denoted as $R \gtrsim S$,
if $\forall w \in R, \exists t \in S$,
  such that $w^{\dagger}=t^{\dagger}$ and $|w| \geq |t|$.

\begin{theorem}[Full Abstraction] \label{eq}
  $\Gamma\vdash M \gtrsim N$ iff $\sbr{\Gamma\vdash M} \gtrsim \sbr{\Gamma\vdash N}$.
\end{theorem}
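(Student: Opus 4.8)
The plan is to read Full Abstraction as the conjunction of \emph{soundness} ($\sbr{\Gamma\vdash M}\gtrsim\sbr{\Gamma\vdash N}$ implies $\Gamma\vdash M\gtrsim N$) and \emph{completeness} ($\Gamma\vdash M\gtrsim N$ implies $\sbr{\Gamma\vdash M}\gtrsim\sbr{\Gamma\vdash N}$), and to prove each direction separately. For soundness the key preliminary is that the semantic improvement relation $\gtrsim$ on regular expressions is a \emph{precongruence}: I would check that each combinator used to interpret the language — composition $\comp$, shuffle $\bowtie$, intersection with $\mathsf{cell}$, restriction $\mid$, the prefixing $\smallfrown$, state composition $\circ$, and the prefixing of a block of cost-tokens $\circled{k}$ — is monotone for $\gtrsim$ in each argument, i.e.\ $R\gtrsim R'$ and $S\gtrsim S'$ entail that combining $R,S$ improves combining $R',S'$. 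The verification is routine operation by operation: each acts on an underlying word exactly as on a word-with-costs and never increases cost, so a cheaper-or-equal matching word can be selected componentwise. Since $\sbr{\Gamma\vdash-}$ is defined compositionally from these combinators, precongruence lifts the hypothesis to $\sbr{\Gamma\vdash C[M]}\gtrsim\sbr{\Gamma\vdash C[N]}$ for every context $C[-]$.

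With precongruence in hand the soundness direction closes using the two bridging theorems. Fix a closing context, so that $\vdash C[M]:\mathsf{com}$ and likewise for $N$, and suppose $C[M]\Downarrow^n$. The state being empty for a closed term, Consistency (Theorem~\ref{cons}) yields $w\in\sbr{\vdash C[M]}$ with $w^{\dagger}=\sr\cdot\sd$ and $|w|=n$. Since $\sbr{\vdash C[M]}\gtrsim\sbr{\vdash C[N]}$, there is $t\in\sbr{\vdash C[N]}$ with $t^{\dagger}=w^{\dagger}=\sr\cdot\sd$ and $|t|\le|w|=n$. By Computational Adequacy (Theorem~\ref{adeq}) this gives $C[N]\Downarrow^{|t|}$, hence $C[N]\Downarrow^{\le n}$. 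As $C[-]$ was arbitrary, $\Gamma\vdash M\gtrsim N$.

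For completeness I would argue by contraposition: assuming $\sbr{\Gamma\vdash M}\not\gtrsim\sbr{\Gamma\vdash N}$ I must produce a separating context. Unfolding the definition of $\gtrsim$, there is a witness play-with-costs $w\in\sbr{\Gamma\vdash M}$ such that \emph{every} $t\in\sbr{\Gamma\vdash N}$ with $t^{\dagger}=w^{\dagger}$ satisfies $|t|>|w|$ — this subsumes the degenerate case in which no matching $t$ exists at all. The plan is then to invoke \emph{definability} for the slot model \cite{Gh05,GM}: from the underlying word $w^{\dagger}$ build a context $C[-]$ whose Opponent-role actions replay exactly the moves of $w^{\dagger}$, so that the \emph{unique} completed interaction of $C[M]$ is the one realising $w$ and terminates with total cost $n=|w|+c$, where $c$ is the fixed overhead contributed by $C$ itself. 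Running $C[N]$ under the same Opponent script then either fails to complete (the no-matching-$t$ case), so $C[N]\not\Downarrow$, or completes along some $t$ with $t^{\dagger}=w^{\dagger}$ and cost $|t|+c>n$; either way $C[N]\not\Downarrow^{\le n}$ while $C[M]\Downarrow^n$, contradicting $\Gamma\vdash M\gtrsim N$.

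The main obstacle is precisely this definability step, and two points need care. First, the separating context must contribute \emph{the same} overhead $c$ to both runs, which is where the fixed placement of cost-tokens (just after the initial move, with the special treatment of $\mathsf{;}$) is used: the Opponent script, and hence the context's own reductions, are identical against $M$ and against $N$, so only the Player costs $|w|$ versus $|t|$ differ. Second, I must ensure the interaction against the fixed context is genuinely unique, which follows from determinacy of the restricted IA$_2$ strategies: once every Opponent move is fixed, the Player responses — and thus the underlying word and its cost — are determined, so there is exactly one completed play to reason about. Granting the definability construction of \cite{Gh05}, the cost bookkeeping and the two-case analysis above complete the argument.
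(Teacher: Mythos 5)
This theorem is not proved in the paper at all: it is quoted from \cite{Gh05}, where it is established for the full language ICA, and the paper merely remarks that it also holds for the IA$_2$ fragment. Your plan --- soundness via precongruence of $\gtrsim$ under the semantic combinators combined with Consistency (Theorem~\ref{cons}) and Computational Adequacy (Theorem~\ref{adeq}), completeness by contraposition and a definability construction of a test context from the witness play, with the cost bookkeeping you describe --- is precisely the structure of the proof in that cited reference, so your approach matches the one the paper relies on.
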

This shows that the two theories of improvement based on operational and
game semantics are identical.

\section{Detecting Timing Leaks} \label{time}

In this section slot-game semantics is used to detect whether a term
with a secret global variable $h$ can leak information about the initial
value of $h$ through its timing behaviour.

For this purpose, we define a special command $\mathsf{skip}^{\#}$ which
similarly as $\mathsf{skip}$ does nothing, but its slot-game semantics
is: $\sbr{\mathsf{skip}^{\#}} = \{ \sr \cdot \# \cdot \sd \}$,
where $\#$ is a new special action, called
\emph{delimiter}. Since we verify security of a term by running
two copies of the same term one after the other, we will use the command
$\mathsf{skip}^{\#}$ to specify the boundary between these two copies.
In this way, we will be able to calculate running times of the two
terms separately.

\begin{theorem} \label{th.closed}
Let $h:\mathsf{varD} | - \vdash M:\mathsf{com}$ be a semi-closed term, and
\footnote{The free identifier $k$ in (\ref{for3}) is used to initialize the
variables $h$ and $h'$ to arbitrary values from $D$. }
\begin{equation} \label{for3}
\begin{array}{l}
R = \lbrack \! \lbrack k: \mathsf{exp}D \vdash \mathsf{new}_D \, h \aasg k \, \mathsf{in} \, M ; \mathsf{skip}^{\#} ;
 \mathsf{new}_D \, h' \aasg k \, \mathsf{in} \, M'  : \mathsf{com} \rbrack \! \rbrack
\end{array}
\end{equation}
Any word of $R$ is of the form $w=w_1 \cdot \# \cdot w_2$
such that $|w_1|=|w_2|$ iff $M$ has no timing leaks, i.e. the fact (\ref{for2}) holds.
\end{theorem}
\begin{proof}
Suppose that any word $w \in R$ is of the form $w=w_1 \cdot \# \cdot w_2$
such that $|w_1|=|w_2|$. Let us analyse the regular expression $R$ defined in (\ref{for3}).
We have:
\[
\begin{array}{l}
R = \{ \sr \cdot k_{var} \cdot \sq^{k} \cdot v^{k} \cdot w_1 \cdot k_{seq} \cdot \# \cdot k_{seq} \cdot k_{var} \cdot  \sq^{k} \cdot v'^{k} \cdot w_2 \cdot \sd \mid  \\
\qquad \qquad \sr \cdot w_1 \cdot \sd \in \sbr{ h \vdash M } \circ \mathsf{cell}_{v}^{h}, \sr \cdot w_2 \cdot \sd \in \sbr{ h' \vdash M' } \circ \mathsf{cell}_{v'}^{h'} \}
\end{array}
\]
for arbitrary values $v, v' \in D$.
In order to ensure that one $k_{seq}$ unit of cost occurs before and after the delimiter action,
$k_{seq}$ is played between
two arguments of the sequential composition as was described in Section \ref{game}.
Given that $\sr \cdot w_1 \cdot \sd \in \sbr{ h \vdash M } \circ \mathsf{cell}_{v}^{h}$ and
$\sr \cdot w_2 \cdot \sd \in \sbr{ h' \vdash M' } \circ \mathsf{cell}_{v'}^{h'}$ for any $v,v' \in D$,
by Computational Adequacy we have that $M,(h \mapsto v) \Downarrow^{|w_1|}$ and $M',(h' \mapsto v') \Downarrow^{|w_2|}$.
Since $|w_1|=|w_2|$, it follows that the fact (\ref{for2}) holds.

Let us consider the opposite direction. Suppose that the fact (\ref{for2}) holds.
The term in (\ref{for3}) is $\alpha$-equivalent to
$k \vdash \mathsf{new}_D \, h \aasg k \, \mathsf{in} \, \mathsf{new}_D \, h' \aasg k \, \mathsf{in} \, M ; \mathsf{skip}^{\#} ; M'$.
Consider $\sbr{h,h' \vdash M ; \mathsf{skip}^{\#} ; M'} \circ \sbr{(h \mapsto v) \otimes (h' \mapsto v')}$,
where $v,v' \in D$.
By Consistency, we have that $\exists w_1 \in \sbr{h,h' \vdash M} \circ \sbr{(h \mapsto v) \otimes (h' \mapsto v')}$
such that $|w_1|=n$ and $w_1$ leaves the state $(h \mapsto v_1) \otimes (h' \mapsto v')$,
and $\exists w_2 \in \sbr{h,h' \vdash M'} \circ \sbr{(h \mapsto v_1) \otimes (h' \mapsto v')}$
such that $|w_2|=n$ and $w_2$ leaves the state $(h \mapsto v_1) \otimes (h' \mapsto v'_1)$.
Any word $w \in R$ is obtained from $w_1$ and $w_2$ as above ($|w_1|=|w_2|$), and so satisfies
the requirements of the theorem.
\end{proof}

We can detect timing leaks from a semi-closed term by verifying that all words
in the model in (\ref{for3}) are in the required form. To do this, we restrict
our attention only to the costs of words in $R$.

\begin{example}
Consider the term:
\[
h : \mathsf{var \, int_2}  \vdash \mathsf{if} \, (!h>0) \, \mathsf{then } \, h \aasg !h+1; \, \mathsf{else} \, \mathsf{skip : com}
\]
The slot-game semantics of this term extended as in ($\ref{for3}$) is:
\[
\begin{array}{l}
\sr \cdot k_{var} \cdot \sq^k \cdot \big( 0^k \cdot k_{seq} \cdot \# \cdot k_{seq} \cdot k_{var} \cdot \sq^k \cdot (0^k \cdot \sd + 1^k  \cdot k_{der} \cdot k_{+} \cdot \sd) \\
\qquad \qquad \qquad + 1^k \cdot k_{seq} \cdot k_{der} \cdot k_{+} \cdot \# \cdot k_{seq} \cdot k_{var} \cdot \sq^k \cdot (0^k \cdot \sd + 1^k \cdot k_{der} \cdot k_{+} \cdot \sd) \big)
\end{array}
\]
This model includes all possible observable interactions
of the term with its environment, which contains only the identifier
$k$, along with the costs measuring its running time. Note that
the first value for $k$ read from the environment is used to initialize $h$,
while the second value for $k$ is used to initialize $h'$.

By inspecting we can see that the model contains the word:
\[
\sr \cdot k_{var} \cdot \sq^k \cdot 0^k \cdot k_{seq} \cdot \# \cdot k_{seq} \cdot k_{var} \cdot \sq^k \cdot 1^k \cdot  k_{der} \cdot k_{+} \cdot \sd
\]
which is not of the required form.
This word (play) corresponds to two computations of the given term where
initial values of $h$ are 0 and 1 respectively, such that the cost of the
second computation has additional $k_{der}+k_{+}$ units more than the first one.
\qed
\end{example}

We now show how to detect timing leaks of a split (open) term $h:\mathsf{varD} | \Delta \vdash M:\mathsf{com}$,
where $\Delta=x_1:T_1, \ldots, x_k:T_k$. To do this, we need to check timing efficiency
of the following model:
\begin{equation} \label{for4}
\sbr{h, h':\mathsf{varD} \vdash M[N_1/x_1, \ldots, N_k/x_k] ; \mathsf{skip}^{\#} ; M'[N_1/x_1, \ldots, N_k/x_k]}
\end{equation}
at state $(h \mapsto v, h' \mapsto v')$, for any closed terms $\vdash N_1:T_1, \ldots, \vdash N_k:T_k$,
and for any values $v,v' \in D$.
As we have shown slot-game semantics respects  theory of operational improvement, so we will need to
examine whether all its complete plays-with-costs $s$ are of the form $s_1 \cdot \# \cdot s_2$
where $|s_1|=|s_2|$. However, the model in (\ref{for4}) can not be represented as
a regular language, so it can not be used directly for detecting timing leaks.

Let us consider more closely the slot-game model in (\ref{for4}).
Terms $M$ and $M'$ are run in the same context $\Delta$, which
means that each occurrence of a free identifier $x_i$ from $\Delta$ behaves
uniformly in both $M$ and $M'$.
So any complete play-with-costs of the model in (\ref{for4}) will be a concatenation of complete
plays-with-costs from models for $M$ and $M'$ with additional constraints that behaviours
of free identifiers from
$\Delta$ are the same in $M$ and $M'$.
If these additional constraints are removed from the above model,
then we generate a model which is an over-approximation of it and where
free identifiers from $\Delta$ can behave freely in
$M$ and $M'$. Thus we obtain:
\[
\begin{array}{l}
\sbr{h, h':\mathsf{varD}  \vdash M[N_1/x_1, \ldots, N_k/x_k] ; \! \mathsf{skip}^{\#} ; \! M'[N_1/x_1, \ldots, N_k/x_k]}
\subseteq \qquad \\
\qquad \qquad \qquad \qquad \qquad \qquad \sbr{h, h':\mathsf{varD}  \vdash M ; \! \mathsf{skip}^{\#} ; \! M'[N_1/x_1, \ldots, N_k/x_k] }
\end{array}
\]
If $\vdash N_1:T_1, \ldots, \vdash N_k:T_k$ are arbitrary closed terms,
then they are interpreted by identity (copy-cat) strategies corresponding to their types, and so we have:
\[
\sbr{h, h':\mathsf{varD} \vdash M ; \! \mathsf{skip}^{\#} ; \! M'[N_1/x_1, \ldots, N_k/x_k] } = \sbr{ h, h':\mathsf{varD}, \Delta \vdash M ; \! \mathsf{skip}^{\#} ; \! M' }
\]
This model is a regular language and we can use it to detect timing leaks.
\begin{theorem}
Let $h:\mathsf{varD} | \Delta \vdash M:\mathsf{com}$ be a split (open) term, where
 $\Delta=x_1:T_1, \ldots, x_k:T_k$, and
\begin{equation} \label{for5}
\begin{array}{l}
S = \lbrack \! \lbrack k: \mathsf{exp}D, \Delta \vdash \mathsf{new}_D \, h \aasg k \, \mathsf{in} \, M ; \mathsf{skip}^{\#} ;
 \mathsf{new}_D \, h' \aasg k \, \mathsf{in} \, M' : \mathsf{com} \rbrack \! \rbrack
\end{array}
\end{equation}
If any word of $S$ is of the form $w=w_1 \cdot \# \cdot w_2$ such that $|w_1|=|w_2|$, Then
$h:\mathsf{varD} | \Delta \vdash M$ has no timing leaks.
\end{theorem}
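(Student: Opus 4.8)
The plan is to reduce the open statement to the closed case of Theorem~\ref{th.closed} and then to close the remaining gap with the over-approximation chain established just before the statement. By Definition~\ref{int-split}, the split term $h:\mathsf{var}D \mid \Delta \vdash M$ has no timing leaks exactly when, for \emph{every} family of closed terms $\vdash N_1:T_1,\ldots,\vdash N_k:T_k$, the semi-closed instance $h:\mathsf{var}D \mid - \vdash M[N_1/x_1,\ldots,N_k/x_k]$ has no timing leaks. So I would fix one such family $N_1,\ldots,N_k$ and try to verify the hypothesis of Theorem~\ref{th.closed} for the substituted term.

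Concretely, I would form the closed regular expression $R_N = \sbr{k:\mathsf{exp}D \vdash \mathsf{new}_D\, h \aasg k\, \mathsf{in}\, M[\vec N/\vec x] ; \mathsf{skip}^{\#} ; \mathsf{new}_D\, h' \aasg k\, \mathsf{in}\, M'[\vec N/\vec x]}$, i.e.\ the instance of~(\ref{for3}) for $M[\vec N/\vec x]$. By Theorem~\ref{th.closed} it then suffices to show that every word of $R_N$ has the form $w_1 \cdot \# \cdot w_2$ with $|w_1| = |w_2|$. For this I would use the compositional reading of substitution: each $\sbr{N_i}$ is an instance of the copy-cat strategy for its type, and since the copy-cat regular expression is the most general closure of an open term, replacing every $\sbr{N_i}$ by the free identifier $x_i$ turns $R_N$ into the regular language $S$ of~(\ref{for5}), along the inclusions $R_N \subseteq \cdots = S$ displayed just above. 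Reading ``all words of $S$ are balanced'' as a property inherited by any sub-language, balance of $S$ would force balance of $R_N$; as $N_1,\ldots,N_k$ were arbitrary, Definition~\ref{int-split} then yields the theorem.

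I expect the transfer of the per-word balance condition $|w_1| = |w_2|$ across this over-approximation to be the main obstacle, because $S$ and $R_N$ account for the context $\Delta$ differently. In $S$ each call to a free identifier $x_i$ contributes only the application cost $k_{app}$, whereas in $R_N$ the very same call is filled in by a whole computation of $N_i$ carrying its own internal costs; hence $R_N \subseteq S$ cannot be read literally, and one must argue that the extra costs injected by the $N_i$ are distributed \emph{symmetrically} over the two halves $w_1$ and $w_2$. The leverage I would exploit is that $M$ and $M'$ are the same term up to renaming $h$ to $h'$, run in one and the same context $\Delta$ instantiated by the same $N_i$: I would show, at the level of interaction sequences and using the composition operation $\circ$ together with Computational Adequacy (Theorem~\ref{adeq}), that corresponding calls to each $x_i$ on the two sides of $\#$ are answered by identical $N_i$-subplays, so that their hidden cost contributions to $w_1$ and to $w_2$ coincide and the equality $|w_1| = |w_2|$ is preserved. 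Pinning down this symmetry precisely---in particular, that secret-dependent control flow cannot make the two copies issue differently-costed calls once $S$ is known to be balanced---is the delicate step on which the whole argument turns.
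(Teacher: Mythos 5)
You have reconstructed essentially the route the paper itself relies on: the paper states this theorem with no explicit proof, leaving it to follow from Definition~\ref{int-split}, the over-approximation inclusions displayed just before the statement, and Theorem~\ref{th.closed}. Your added value is that you flag honestly what the paper glosses over: the ``inclusion'' $R_N \subseteq \cdots = S$ is not a literal inclusion of languages-with-costs, since a word of $R_N$ carries the hidden internal costs of the instantiated terms $N_i$, whereas the corresponding word of $S$ shows the $\Delta$-moves but none of those costs. However, the repair you propose for this---that balance of $S$ forces corresponding calls on the two sides of $\#$ to be answered by identical $N_i$-subplays, so the hidden costs cancel---cannot be made to work. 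The hypothesis that every word of $S$ splits as $w_1 \cdot \# \cdot w_2$ with $|w_1|=|w_2|$ constrains only the \emph{costs} of the two halves; it says nothing about the identity of the $\Delta$-moves occurring in them, so it cannot force the two copies to issue the same calls, and consequently cannot force the hidden contributions of the $N_i$ to $w_1$ and to $w_2$ to agree.

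In fact the transfer step is not merely delicate but false. Take $\Delta = x:\mathsf{com},\, y:\mathsf{com}$ and
\[
M \;=\; \mathsf{if}\,(!h>0)\,\mathsf{then}\;x\;\mathsf{else}\;y .
\]
Calls to base-type free identifiers are interpreted by cost-free copy-cat behaviour (no $k_{app}$ is added, since there is no application), so \emph{every} word of $S$ has both halves of cost $k_{var}+k_{if}+k_{op}+k_{der}+k_{seq}$, whichever branch---and hence whichever identifier---each half happens to use: the hypothesis of the theorem holds. Yet instantiating $x \mapsto \mathsf{skip}$ and $y \mapsto \mathsf{skip};\mathsf{skip}$ gives a semi-closed term that takes $k_{seq}$ more reduction steps from $h=0$ than from $h=1$, i.e.\ a timing leak in the sense of Definition~\ref{int-split} (for any cost interpretation with $k_{seq}>0$). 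The two halves of the $S$-words interact with \emph{different} identifiers at equal visible cost, and the context then injects asymmetric hidden cost, which the slot-game model of the open term simply does not record. So the gap you identified cannot be closed by any symmetry argument: it is a genuine defect shared by the paper's implicit justification, stemming from the mismatch between the model of an open term (which omits the context's internal costs) and Definition~\ref{int-split} (which, operationally, counts them).
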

Note that the opposite direction in the above result does not hold.
That is, if there exists a word from $S$ which is not of the required form
then it does not follow that $M$ has timing leaks, since the found word (play)
may be spurious introduced due to over-approximation in the model in (\ref{for5}),
and so it may be not present in the model in (\ref{for4}).

\begin{example}
Consider the term:
\[
h : \mathsf{var int_2}, f : \mathsf{expint_2}^{f,1} \to \mathsf{com}^{f}  \vdash f( !h) : \mathsf{com}
\]
where $f$ is a non-local call-by-name function.

The slot-game model for this term is as follows:
\[
\sr \cdot  k_{app} \cdot \sr^{f} \cdot  ( \sq^{f,1} \cdot k_{der} \cdot \srd^{h} \cdot  ( 0^{h} \cdot 0^{f,1} + 1^{h} \cdot 1^{f,1} ) )^* \cdot \sd^{f} \cdot \sd
\]
Once $f$ is called, it may evaluate its argument, zero or more times, and
then it  terminates successfully.
Notice that moves tagged with $f$ represent the actions of calling and
returning from the function $f$, while moves tagged with $f,1$ indicate actions
of the first argument of $f$.

If we generate the slot-game model of this term extended as in (\ref{for5}),
we obtain a word which is not in the required form:
\[
\begin{array}{l}
\sr \cdot k_{var} \cdot \sq^k \cdot 0^k  \cdot k_{app} \cdot \sr^f \cdot  \sq^{f,1} \cdot k_{der} \cdot 0^{f,1} \cdot \sd^{f} \cdot k_{seq} \cdot \# \cdot  k_{seq} \cdot k_{var} \cdot \sq^k \cdot 1^k \cdot  k_{app} \cdot \sr^{f} \cdot  \sd^{f} \cdot  \sd
\end{array}
\]
This word corresponds to two computations of the term, where the first one calls $f$ which evaluates its argument once,
and the second calls $f$ which does not evaluate its argument at all.
The first computation will have the cost of $k_{der}$ units more that the second one.
However, this is a spurious counter-example,
since $f$ does not behave uniformly in the two computations, i.e.\ it calls
its argument in the first but not in the second computation.
\qed
\end{example}

To handle this problem, we can generate an under-approximation of the model given in (\ref{for4}) which
can be represented as  a regular language.
Let $h:\mathsf{varD} \mid \Delta \vdash M$ be a term derived without using the contraction rule
for any identifier from $\Delta$. Consider the following model:
\begin{equation} \label{for6}
\begin{array}{l}
\sbr{ h,h':\mathsf{varD} \mid \Delta \vdash M ; \mathsf{skip}^{\#} ; M'}^m = \sbr{ h,h':\mathsf{varD} \mid \Delta \vdash M ; \mathsf{skip}^{\#} ; M'} \ \cap \qquad \qquad \\
\qquad \qquad \qquad \qquad \qquad \qquad \qquad ( \mathsf{delta}_{T_1,m}^{x_1} \bowtie \ldots  \bowtie \mathsf{delta}_{T_k,m}^{x_k} \bowtie (\mathcal A_{\sbr{h,h':\mathsf{varD} \vdash \mathsf{com}}}\!+\!\!\circled{\$})^* )
\end{array}
\end{equation}
where $m \geq 0$ denotes the number of times that free identifiers of function types
may evaluate its arguments at most.
The regular expressions $\mathsf{delta}_{T,m}$ are used to repeat zero or once an arbitrary behaviour
for terms of type $T$, and are defined as follows.
\[ \begin{array}{l}
\mathsf{delta}_{\mathsf{exp}D,0} = \sq \cdot \sum_{n \in D} n \cdot (\epsilon + \sq \cdot n) \qquad
\mathsf{delta}_{\mathsf{com},0} = \sr \cdot \sd \cdot (\epsilon + \sr \cdot \sd) \\
\mathsf{delta}_{\mathsf{var}D,0} = ( \srd \cdot \sum_{n \in D} n \cdot (\epsilon + \srd \cdot n )) \ + \
  ( \sum_{n \in D} \sw(n) \cdot \sok \cdot (\epsilon + \sw(n) \cdot \sok ) )
\end{array}
\]
If $T$ is a first-order function type, then $\mathsf{delta}_{T,m}$ will be
a regular language only when the number of times its arguments can be evaluated is limited.
For example, we have that:
\[
\mathsf{delta}_{\mathsf{com}^1 \to \mathsf{com},m} = \sr \cdot  \sum_{r=0}^{m} ( \sr^1 \cdot \sd^1 )^r \cdot \sd \cdot
(\epsilon + \sr \cdot ( \sr^1 \cdot \sd^1 )^r \cdot  \sd )
\]
If $T$ is a function type with $k$ arguments, then we have to remember
not only how many times arguments are evaluated in the first call, but also the exact
order in which arguments are evaluated.

Notice that we allow an arbitrary behavior of type $T$
to be repeated zero or once in $\mathsf{delta}_{T,m}$, since it is possible
that depending on the current value of $h$
an occurrence of a free identifier from
$\Delta$ to be run in $M$ but not in $M'$, or vice versa. For example, consider the term:
\[
h:\mathsf{var \, int}_2 \mid x,y : \mathsf{exp \, int}_2  \vdash \mathsf{new}_{int_2} \, z \aasg 0 \, \mathsf{in \, if} \, ( !h > 0 ) \, \mathsf{then} \ z \aasg x \, \mathsf{else} \ z \aasg y+1
\]
This term has timing leaks, and the corresponding counter-example contains only one
interaction with $x$ occurred in a computation,
and one interaction with $y$ occurred in the other computation.
This counter-example will be included in the model in (\ref{for6}),
only if $\mathsf{delta}_{T,m}$ is defined as above.

Let $h:\mathsf{varD} \mid \Delta \vdash M$ be an arbitrary term where identifiers from $\Delta$
may occur more than once in $M$.
Let $h:\mathsf{varD} \mid \Delta_1 \vdash M_1$ be derived without using the contraction for $\Delta_1$,
such that $h:\mathsf{varD} \mid \Delta \vdash M$ is obtained from it by applying
one or more times the contraction rule for identifiers from $\Delta$.
Then $\sbr{ h, h':\mathsf{varD} \mid \Delta \vdash M ; \mathsf{skip}^{\#} ; M'}^m$ is obtained
by first computing $\sbr{ h, h':\mathsf{varD} \mid \Delta_1 \vdash M_1 ; \mathsf{skip}^{\#} ; M_1'}^m$
as defined in (\ref{for6}),
and then by suitable tagging all moves associated with several occurrences of
the same identifier from $\Delta$ as described in the interpretation of contraction.
We have that:
 \[
 \begin{array}{l}
 \sbr{ h, h':\mathsf{varD}, \Delta  \vdash M ; \mathsf{skip}^{\#} ; M' }^m \subseteq
  \sbr{ h, h':\mathsf{varD}  \vdash
 M[N_1/x_1, \ldots, N_k/x_k] ; \! \mathsf{skip}^{\#} ; \! M'[N_1/x_1, \ldots, N_k/x_k]}
 \end{array}
  \]
for any $m \geq 0$ and arbitrary closed terms $\vdash N_1:T_1, \ldots, \vdash N_k:T_k$.

In the case that $\Delta$ contains only identifiers of base types $B$
which do not occur in any $\mathsf{while}$-subterm of $M$, then
in the above formula the subset relation becomes the equality for $m=0$.
If a free identifier occurs in a $\mathsf{while}$-subterm of $M$, then it can be called
arbitrary many times in $M$, and so we cannot reproduce its behaviour in $M'$.

\begin{theorem}
Let $h:\mathsf{varD} \! | \! \Delta \vdash M$ be a split (open) term, where
 $\Delta=x_1\!:\!T_1, \ldots, x_k\!:\!T_k$, and
 \begin{equation} \label{for7}
T = \begin{array}{l}
\lbrack \! \lbrack k: \mathsf{exp}D, \Delta \vdash \mathsf{new}_D \, h \aasg k \, \mathsf{in} \,  M ; \mathsf{skip}^{\#} ;
 \mathsf{new}_D \, h' \aasg k \, \mathsf{in} \, M' : \mathsf{com} \rbrack \! \rbrack^m
\end{array}
\end{equation}
\begin{itemize}
\item[(i)] Let $\Delta$ contains only identifiers of base types $B$,
which do not occur in any $\mathsf{while}$-subterm of $M$.
Any word of $T$ (where $m=0$) is of the form $w_1 \cdot \# \cdot w_2$ such that $|w_1|=|w_2|$ iff $M$ has no timing leaks.
\item[(ii)] Let $\Delta$ be an arbitrary context.
If there exists a word $w=w_1 \cdot \# \cdot w_2 \in T$ such that $|w_1| \neq |w_2|$, Then $M$ does have timing leaks.
\end{itemize}
\end{theorem}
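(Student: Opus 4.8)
The plan is to reduce both parts to the semi-closed case of Theorem~\ref{th.closed} by instantiating the free identifiers of $\Delta$, using the two facts (stated just before the theorem) that relate the under-approximating model $T=\sbr{\cdots}^m$ of (\ref{for7}), built as in (\ref{for6}), to the substituted models of (\ref{for4}), together with Definition~\ref{int-split}. Concretely, I would first record that for every $m\ge 0$ and every word $w\in T$ there is a tuple of closed terms $\vdash N_i:T_i$ such that $w$ is a word of $\sbr{h,h'\vdash M[N_1/x_1,\ldots,N_k/x_k];\mathsf{skip}^{\#};M'[N_1/x_1,\ldots,N_k/x_k]}$; this is exactly the containment established before the statement, expressing that $T$ is a genuine under-approximation, i.e.\ contains no spurious plays. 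When $m=0$ and $\Delta$ consists only of base-type identifiers occurring in no $\mathsf{while}$-subterm of $M$, this containment tightens to an equality $T=\bigcup_{N_1,\ldots,N_k}\sbr{h,h'\vdash M[N_1/x_1,\ldots];\mathsf{skip}^{\#};M'[N_1/x_1,\ldots]}$, because each such occurrence is evaluated at most once per run and $\mathsf{delta}_{B,0}$ captures precisely this uniform single-use behaviour across the two copies.

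Next I would unwind the shape of words in $T$ exactly as in the proof of Theorem~\ref{th.closed}. Every $w\in T$ factors through the interpretations of $\mathsf{new}$ and $;$, so it splits at the delimiter as $w=w_1\cdot\#\cdot w_2$, where the prefix encodes the initialisation $h\aasg v$ (reading $v$ from $k$) followed by a complete play of $M[N_1/x_1,\ldots]$ run at $(h\mapsto v)$, and the suffix encodes $h'\aasg v'$ followed by a complete play of $M'[N_1/x_1,\ldots]$ at $(h'\mapsto v')$. Because one $k_{seq}$ unit is charged on each side of $\#$ (the reason sequential composition places its cost between its arguments), the equation $|w_1|=|w_2|$ holds iff the two running times coincide. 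By Computational Adequacy (Theorem~\ref{adeq}) these plays witness $M[N_1/x_1,\ldots],(h\mapsto v)\Downarrow^{|w_1|}$ and $M'[N_1/x_1,\ldots],(h'\mapsto v')\Downarrow^{|w_2|}$. I would also note the elementary fact that if $v=v'$ the two computations coincide up to the renaming $h\mapsto h'$, forcing $|w_1|=|w_2|$; hence any $w$ with $|w_1|\neq|w_2|$ necessarily has $v\neq v'$ and is therefore a genuine witness to the failure of (\ref{for2}).

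For part (ii) this is immediate: given $w=w_1\cdot\#\cdot w_2\in T$ with $|w_1|\neq|w_2|$, the under-approximation supplies closed $N_1,\ldots,N_k$ for which, by the previous paragraph, $M[N_1/x_1,\ldots]$ exhibits two computations at distinct secret values $v\neq v'$ with different running times. Thus $M[N_1/x_1,\ldots]$ has timing leaks, and by the contrapositive of Definition~\ref{int-split} the split term $M$ has timing leaks. For part (i), where the containment is an equality, I would argue both directions. If every word of $T$ has $|w_1|=|w_2|$, then for each closed tuple $N_1,\ldots,N_k$ the model (\ref{for4}) --- a subset of $T$ --- has the same property, so by Consistency (Theorem~\ref{cons}) and Adequacy any two terminating runs of $M[N_1/x_1,\ldots]$ have equal cost, i.e.\ it has no timing leaks; as this holds for all instantiations, Definition~\ref{int-split} yields that $M$ has none. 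Conversely, if $M$ has no timing leaks then each $M[N_1/x_1,\ldots]$ satisfies (\ref{for2}), and since $T$ equals the union of the models (\ref{for4}), every $w=w_1\cdot\#\cdot w_2\in T$ arises from some instantiation and hence has $|w_1|=|w_2|$ (using the $v=v'$ remark to cover the degenerate case).

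The step I expect to be the main obstacle is justifying the $m=0$ equality underpinning part (i): one must show that $\mathsf{delta}_{B,0}$ reproduces \emph{exactly} the behaviour of an arbitrary base-type identifier used uniformly in $M$ and $M'$, neither dropping realisable uniform plays nor admitting non-uniform ones. This is precisely where the hypotheses of (i) --- base types only, and no occurrence inside a $\mathsf{while}$ --- are indispensable, since they bound the number of evaluations of each occurrence by one; for function types or identifiers under a loop the number of evaluations is unbounded, the equality degrades to a strict containment, and only the one-directional conclusion of (ii) survives.
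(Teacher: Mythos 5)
Your proposal follows essentially the same route as the paper, which offers no explicit proof of this theorem but presents it as a direct consequence of the containment and $m=0$ equality facts asserted immediately before it, combined with Theorem~\ref{th.closed} and Definition~\ref{int-split} --- precisely the reduction you carry out. Your extra care about the $v=v'$ case (handled by determinism of semi-closed terms) and your flagging of the $m=0$ equality as the real proof burden are refinements of, not departures from, the paper's implicit argument.
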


Note that if a counter-example witnessing a timing leakage is found,
then it provides a specific context $\Delta$, i.e.\ a concrete definition
of identifiers from $\Delta$, for which the given open term have timing leaks.

\section{Detecting Timing-Aware Non-interference} \label{time-int}

The slot-game semantics model contains enough information to check
the non-interference property of terms along with timing leaks.
The method for verifying the non-interference property is analogous
to the one described in \cite{D13}, where we use the standard game
semantics model. As slot-game semantics can be considered as
the standard game semantics augmented with the information about
quantitative assessment of time usage, we can use it as underlying model
for detection of both non-interference property and timing leaks,
which we call \emph{timing-aware non-interference}.

In what follows, we show how to verify timing-aware non-interference
property for closed terms. In the case of open terms, the method can
be extended straightforwardly by following the same ideas for handling
open terms described in Section \ref{time}.

Let $l:\mathsf{varD}, h:\mathsf{varD'} \vdash M:\mathsf{com}$ be a term
where $l$ and $h$ represent low- and high-security global variables respectively.
We define $\Gamma_1=l:\mathsf{var}D, h:\mathsf{var}D'$, $\Gamma'_1=l':\mathsf{var}D, h':\mathsf{var}D'$,
and $M'$ is $\alpha$-equivalent to $M[l'/l,h'/h]$
 where all bound variables are suitable renamed.
 We say that $\Gamma_1 | - \vdash M:\mathsf{com}$ satisfies \emph{timing-aware non-interference} if
\[
\begin{array}{ll}
\forall s_1 \in St(\Gamma_1), s_2 \in St(\Gamma'_1). & s_1(l)=s_2(l') \land s_1(h) \neq s_2(h') \land \\
&  \Gamma_1 \vdash M ; M',\mathrm{s_1} \otimes \mathrm{s_2} \rightsquigarrow^{n_1} \mathsf{skip} ; M',\mathrm{s_1}' \otimes \mathrm{s_2} \rightsquigarrow^{n_2} \mathsf{skip ; skip},\mathrm{s_1}' \otimes \mathrm{s_2}'  \\
& \implies s_1'(l)=s_2'(l') \ \land \ n_1 = n_2
\end{array}
\]

Suppose that $\mathsf{abort}$ is a special free identifier of type $\mathsf{com}^{abort}$ in $\Gamma$.
We say that a term $\Gamma \vdash M$ is \emph{safe} iff
$\Gamma  \vdash M[\mathsf{skip}/\mathsf{abort}] \sqsub M[\mathsf{diverge}/\mathsf{abort}]$
\footnote{$\sqsub$ denotes observational approximation of terms (see \cite{AM2})};
otherwise we say that a term is \emph{unsafe}.
It has been shown in \cite{DGL.SAS} that
a term $\Gamma \vdash M$ is safe
iff $\sbr{\Gamma \vdash M}$ does not contain any play with moves from $\mathcal A_{\sbr{\mathsf{com}}}^{abort}$,
which we call unsafe plays.
For example, $\sbr{\mathsf{abort:com^{abort}} \vdash \mathsf{skip \, ; abort:com }}
= \sr \, \cdot \, \sr^{abort} \, \cdot \, \sd^{abort} \, \cdot \, \sd $, so this term is unsafe.

By using Theorem \ref{th.closed} from Section \ref{time} and the corresponding result
for closed terms from \cite{D13}, it is easy to show the following result.
\begin{equation} \label{for8}
\begin{array}{l}
L = \lbrack \! \lbrack k: \mathsf{exp}D, k':\mathsf{exp}D', \mathsf{abort}:\mathsf{com} \vdash  \mathsf{new}_D \, l \aasg k \, \mathsf{in} \, \mathsf{new}_{D'} \, h \aasg k' \, \mathsf{in} \\
\qquad \qquad \qquad \qquad \qquad \qquad \qquad \qquad \mathsf{new}_D \, l' \aasg !l \, \mathsf{in} \, \mathsf{new}_{D'} \, h' \aasg k' \, \mathsf{in} \\
\qquad \qquad \qquad \qquad \qquad \qquad \qquad \qquad \mathsf{skip}^{\#} ; M ; \mathsf{skip}^{\#} ; M'; \mathsf{skip}^{\#} ; \mathsf{if} \, (!l \neq !l') \, \mathsf{then} \, \mathsf{abort} : \mathsf{com} \rbrack \! \rbrack
\end{array}
\end{equation}
The regular expression $L$ contains no unsafe word (plays) and all its words
are of the form $w=w_1 \cdot \# \cdot w_2 \cdot \# \cdot w_3 \cdot \# \cdot w_4$ such that $|w_2|=|w_3|$
iff $M$ satisfies the timing-aware non-interference property.

Notice that the free identifier $k$ in (\ref{for8}) is used to initialize the
variables $l$ and $l'$ to any value from $D$ which is the same for both $l$ and $l'$,
while $k'$ is used to initialize $h$ and $h'$ to any values from $D'$. The last $\mathsf{if}$
command is used to check values of $l$ and $l'$ in the final state after evaluating
the term in (\ref{for8}). If their values are different, then $\mathsf{abort}$ is run.

\section{Application} \label{app}

We can also represent slot-game semantics model of IA$_2$
by using the CSP process algebra. This can be done by
extending the CSP representation of standard game semantics given in \cite{DL},
by attaching the costs corresponding to each translation rule.
In the same way, we have adapted the verification tool in \cite{DL}
to automatically convert an IA$_2$ term
into a CSP process \cite{Ros} that represents its slot-game semantics.
The CSP process outputted by our tool is defined by a script in machine readable CSP
which can be analyzed by the FDR tool. It represents a model checker for the CSP process algebra,
and in this way a range of properties of terms can be verified by calls to it.

In the input syntax of terms,
we use simple type annotations to indicate what finite sets of
integers will be used to model free identifiers and local
variables of type integer. An operation between values of types
$\mathsf{int}_{n_1}$ and $\mathsf{int}_{n_2}$ produces a value of
type $\mathsf{int}_{max\{n_1,n_2\}}$. The operation is performed
modulo $max\{n_1,n_2\}$.

In order to use this tool to check for timing leaks in terms,
we need to encode the required property as a CSP process
(i.e.\ regular-language). This can be done only if we know
the cost of the worst plays (paths) in the model of a given term.
We can calculate the worst-case cost of a term by generating
its model, and then by counting the number of tokens in
its plays. The property we want to check will be:
$\sum_{i=0}^{n} \circled{i} \cdot \# \cdot \circled{i}$, where $n$ denotes the
worst-case cost of a term.

To demonstrate practicality of this approach for automated verification,
we consider the following implementation of the linear-search algorithm.
\begin{center}
$\begin{array}{l}
 h : \mathsf{var int_2}, x[k] \, : \, \mathsf{var int_2} \vdash \\
 \qquad \mathsf{new}_{int_2} \,  a[k] \aasg 0 \, \mathsf{in} \\
 \qquad \mathsf{new}_{int_{k+1}} \, i \aasg 0 \, \mathsf{in} \\
 \qquad \mathsf{while} \, (i < k) \, \mathsf{do} \, \{ a[i] := !x[i] ; \ i := !i + 1 ;\} \\
 \qquad \mathsf{new}_{int_2} \,  y \aasg !h \, \mathsf{in} \\
 \qquad \mathsf{new}_{bool} \,  present:=ff \, \mathsf{in} \\
 \qquad \mathsf{while} \, (i<k \, \&\& \, \neg present) \, \mathsf{do} \, \{ \\
 \qquad \qquad  \mathsf{if} \, (compare(!a[i],!y)) \, \mathsf{then} \, present := tt; \\
 \qquad \qquad i := !i + 1; \,   \\
 \qquad \} \, : \mathsf{com}
\end{array}$
\end{center}
The meta variable $k>0$ represents the array size.
The term copies the input array $x$ into a local array $a$,
and the input value of $h$ into a local variable $y$.
The linear-search algorithm is then used to find whether the value stored in $y$ is in  the local array.
At the moment when the value is found in the array, the term terminates successfully.
Note that arrays are introduced in the model as syntactic sugar
by using existing term formers. So an array $x[k]$ is represented
as a set of $k$ distinct variables $x[0], \ldots, x[k-1]$ (see \cite{DL,GM} for details).

Suppose that we are only interested in measuring the efficiency of the
term relative to the number of $compare$ operations.
It is defined as follows $compare: \mathsf{expint_2 \to expint_2 \to expbool}$,
and its semantics compares for equality the values of two arguments with cost $\circled{\$}$:
\[
\begin{array}{l}
\sbr{compare: \mathsf{expint_2^1 \to expint_2^2 \to expbool}} =
 \sq \cdot \circled{\$} \cdot \sq^1 \cdot ( \sum_{m \neq n } m^1 \cdot \sq^2 \cdot n^2 \cdot ff ) + ( \sum_{m=n } m^1 \cdot \sq^2 \cdot n^2 \cdot tt )
\end{array}
\]
where $m,n \in \{ 0,1 \}$. We assume that the costs of all other
operations are relatively negligible (e.g.\ $k_{var}=k_{der}=\ldots=0$).

\begin{figure*}
\centerline{\scalebox{1.6}{\psfig{figure=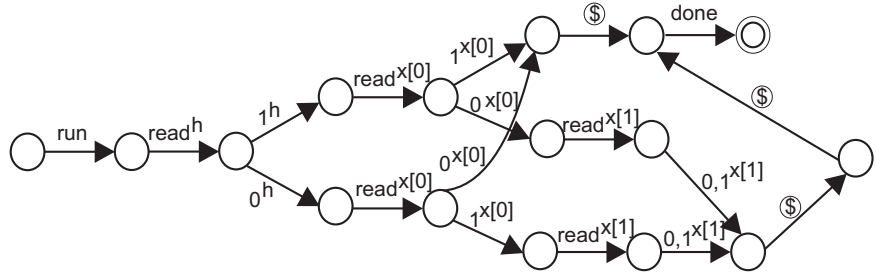}}}
\caption{Slot-game semantics for the linear search with $k$=2 } \label{linear.model}
\end{figure*}
We show the model for this term with $k=2$ in Fig.~\ref{linear.model}.
The worst-case cost of this term is equal to the array's size $k$,
which occurs when the search fails or the value of $h$ is
compared with all elements of the array.
We can perform a security analysis for this term by
considering the model extended as in (\ref{for7}), where $m=0$.
We obtain that this term has timing leaks, with a counter-example
corresponding to two computations, such that initial values of $h$
are different, and the search
succeeds in the one after only one iteration of $\mathsf{while}$ and fails in the other.
For example, this will happen when all values in the array $x$ are 0's, and the value of $h$ is 0 in the first computation
and 1 in the second one.

We can also automatically analyse in an analogous way
terms where the array size $k$ is much larger. Also the set of data
that can be stored into the global variable $h$ and array $x$
can be larger than $\{0,1\}$. In these cases we will obtain models
with much bigger number of states, but they still can be automatically
analysed by calls to the FDR tool.

\section{Conclusion}

In this paper we have described how game semantics can be used for verifying security
properties of open sequential programs, such as timing leaks and non-interference.
This approach can be extended to terms with infinite data types,
such as integers, by using some of the existing methods and tools
based on game semantics for verifying such terms.
Counter-example guided abstraction refinement procedure (ARP) \cite{DGL.SAS}
 and symbolic representation of game semantics model \cite{D12}
are two methods which can be used for this aim.
The technical apparatus introduced here applies not only to time as a resource
but to any other observable resource, such as power or
heating of the processor. They can all be modeled in the framework of
slot games and checked for information leaks.

We have focussed here on analysing the IA language, but we can easily extend this
approach to any other language for which game semantics exists.
Since fully abstract game semantics was also defined for probabilistic \cite{Dan},
concurrent \cite{Gh.ConCSP}, and
programs with exceptions \cite{AM2}, it will be interesting to
extend this approach to such programs.

\providecommand{\urlalt}[2]{\href{#1}{#2}}
\providecommand{\doi}[1]{doi:\urlalt{http://dx.doi.org/#1}{#1}}

\bibliographystyle{eptcs}

\end{document}